\newcommand{\ket}[1]{\vert {#1} \rangle}  %---use: \ket{\psi}
\newcommand{\bra}[1]{\langle {#1} \vert}
\title{Holistic type extension for classical propositional logic in quantum computation}
\author{{\sc H. Freytes}, {\sc R. Giuntini}, {\sc G. Sergioli}}
\date{{\small
Dipartimento di Filosofia, Universit\'{a} di Cagliari, Via Is Mirrionis 1, 09123, Cagliari-Italia}}
\begin{document}

\maketitle

\begin{abstract}
An holistic extension of classical propositional logic is introduced in the framework of quantum computation with mixed states. The concepts of tautology and contradiction are investigated in this extensions. A special family of quantum states are investigated as particular cases of ``holistic" contradiction.
\end{abstract}

\begin{small}
{\em Keywords: Holistic Semantics, Quantum Computational Logic, Fuzzy Logic .}
\end{small}

%\bibliography{pom}

\newtheorem{theo}{Theorem}[section]
\newtheorem{definition}[theo]{Definition}
\newtheorem{lem}[theo]{Lemma}
\newtheorem{prop}[theo]{Proposition}
\newtheorem{coro}[theo]{Corollary}
\newtheorem{exam}[theo]{Example}
\newtheorem{rema}[theo]{Remark}{\hspace*{4mm}}
\newtheorem{example}[theo]{Example}
\newcommand{\proof}{\noindent {\em Proof:\/}{\hspace*{4mm}}}
\newcommand{\qed}{\hfill$\Box$}

\section*{Introduction}
In recent years, an increasing interest on logical systems related to quantum mechanics has arisen. Most of these systems are not strictly related to the standard quantum logic but they are focused on concrete problems based on some particular situation associated to quantun systems. It motivates foundational problems also related to the approaches to knowledge \cite{BDa,D1,D2}. In our case, we study an holistic type extension of the classical propositional logic in the context of quantum computing. In this extension the notion of classical contaddition exhibits an interesting behaviour that suggests a sort of relation with paraconsistent logic associated to quantum superposition \cite{DaD}.
 
The concept of quantum computing was introduced at the beginning of the 1980s by Richard Feynman. One of the central issues he posed,  
was the difficulty of efficiently simulating the evolution of a quantum system on a classical computer. He pointed out the computational benefits that arise by employing quantum systems in place of classical ones. With this aim he proposed a new kind of computer: a {\it quantum computer} \cite{FEY}. It was not conceived as a Turing machine, but as a different kind of machine able to simulate any quantum system, including the physical world. Quantum computing can simulate all computations which can be performed by classical systems. However, one of the main advantages of quantum computation and quantum algorithms is that they can speed up computations.

The idea of quantum computation was only theoretically interesting until Deutsch introducted the concept of universal quantum computer \cite{DEU} and Shor discovered an algorithm able to factorize large numbers in a polynomial time \cite{SH}.  After that, several new researches on quantum computation were started.

Quantum computation is motivated by the fact that quantum systems make possible new interesting forms of computational and communication processes. In fact, quantum computation can be seen as an extension of classical computation where new primitive information resources are introduced. One of the main ingredients of such an extension is the notion of quantum bit (\emph{qubit} for short) which is the quantum computational counterpart of the classical bit. On this basis, new form of computational processes are developed in order to operate with these new information resources.

In classical computation, information is encoded by a sequence of bits. A bit is viewed as a kind of physical object which can assume one of two distinct  classical states, represented by the binary numbers $0$ or $1$. Bits are manipulated via an ensemble of logical gates like {\it NOT, OR, AND}, etc, arranged in circuits giving out the result of a calculation.

Standard quantum computing is based on quantum systems described by finite dimensional Hilbert spaces, starting from  ${\mathbb{C}}^2$, that is the
two-dimensional space where any qubit lives. A qubit is represented by a unit vector
in ${\mathbb{C}}^2$, while {\it $n$-qubits} (where $n$ is a positive integer) are represented by unit vectors in
${\mathbb{C}}^{2^n}$. Similarly to the classical case, we can introduce and study the behavior of a certain number of {\it quantum
logical gates} (hereafter \emph{quantum gates} for short) operating on $n$-qubits. As in the classical case, a quantum circuit is identified with an appropriate composition of quantum gates. These gates
are mathematically represented by unitary operators acting on pure states of an appropriate  Hilbert space and thus they only represent reversible processes. But for many reasons this restriction to unitary operators is unduly. In fact, a quantum system is rarely in a pure state. This may be
caused, for example, by the incomplete efficiency in the preparation procedure and also by manipulations on the system as measurements over pure states; in both cases we are faced with \emph{statistical mixtures}. Such restriction motivated the study of more general models for quantum computational processes, where pure states and unitary operators are replaced by density operators and quantum operations, respectively. This approach to quantum computing, where not only reversible transformations are taken into account, is called {\it quantum computation with mixed states}. 

In this powerful model, combinational structures associated to a set of quantum gates induce new forms of quantum logical systems that play a similar role to Boolean algebras with respect to digital circuits. We focus our attention on the combinational structure of quantum circuits built from a particular quantum gate: the \emph{Toffoli quantum gate}.

The study of the combinational logic underlying Toffoli quantum gate is interesting for several reasons. One of these is related to the universality of quantum gates. A finite set of quantum gates is said to be {\it universal} if and only if any other unitary quantum gate can be expressed as a finite sequence of gates of such a set. Mathematically, by a simple cardinality argument, we can see that such a set can not exist. 
To investigate the universality problem, we can refer to a 
weaker notion of approximate universality.
A finite set of quantum gates is said to be approximate universal if and only if any quantum gate can be approximated by a finite  sequence of gates of the set. 
Quantum computation only needs two gates to be approximate universal: the Toffoli and the Hadamard gates \cite{SHI}.
The fact that Toffoli and Hadamard gates define an approximate universal set is of a particular interest from a foundational point of view: as is  well known, the Toffoli gate can exactly perform all classical reversible computation. Consequently, the Hadamard gate is all we need to add to classical computation in order to achieve the full quantum computational  power. In this perspective, the combinational logic associated to Toffoli gate is the logic of classical reversible processes in quantum computation. 

Another reason that makes interesting the logic of Toffoli gate, is its connection with fuzzy logic. Indeed, from a probabilistic point of view, Toffoli gate behaves as the conjunction of {\it Product logic} \cite{CT}. This logical system is related to the so called {\it fuzzy logic of continuous $t$-norms} introduced by H\'ajek in the second part of 90's \cite{HAJ}. Further, Product logic is also related to game theory applied to classical communication with feedback \cite{MONT1,PELC}. It suggests potential implementations of Toffoli gate in the quantum error correction theory.

The aim of this paper is to study two extensions of classical logic arisen from the Toffoli gate: one comes from the fuzzy behavior of the Toffoli gate mentioned above and the other emerges from the holistic nature of bipartite quantum systems.

The paper is structured as follows: the first two sections provide all the necessary ingredients to make the article self-contained. More precisely, in Section 1 we introduce some basic notions concerning non-separability and bipartite quantum systems. In Section 2, we briefly describe the model of quantum computation with mixed states. In Section 3 we introduce a general logical framework associated to quantum circuits. This new form of quantum logic is compared to the standard quantum logic based on the closed subspaces of the Hilbert space (\emph{Hilbert lattices}). Section 4 is devoted to  studying the fuzzy extension arising from Toffoli gate. This extension will be defined by means of two particular instances of Toffoli gate: 
$\mathbb{AND}$ and $\mathbb{NOT}$. In Section 5, an holistic type extension for  classical logic is investigated. This extension is motivated by the application of $\mathbb{AND}$ gate on non-separable states. We also study the notion of contradiction in the holistic extension of classical logic and Werner states are introduced as particular cases of these contradictions. 

Some arguments and possible open discussions are briefly introduced as conclusive remarks.

\section{Bipartite quantum systems.} \label{HOL}

The notion of {\it state of a physical system} is familiar from its use in classical mechanics, where it is linked to the initial
conditions (the initial values of position and momenta) which determine the solutions of the equations of motion for the system. For
any value of time, the state is represented by a point in the phase space. In classical physics, compound systems can be decomposed into their subsystems. Conversely, individual systems can be combined to give overall composite systems. In this way, a classical global system is completely described in terms of the states of its subsystems and their mutual dynamic interactions. In other words, classical physics follows a separability principle that can be schematically expressed as:

\begin{itemize}
\item[] {\it Separability Principle}: The states of any space-time separated subsystems $S_1, S_2, \ldots, S_n$ of a compound system $S$ are individually well defined and the states of the compound system are wholly and completely determined by them and by their physical interactions,  including their space-time relations \cite{HAY, HOW}.

\end{itemize}

From a mathematical point of view, the separability condition of classical systems comes from the fact that states of compound systems are represented as direct sum of the states of their subsystems. 

In quantum mechanics the  description of the state  is substantially modified. The state of a quantum system embodies the specific history prior to the instant which the state refers to. Before giving the definition of quantum state, we introduce the concept of {\it maximal quantum test}. Suppose that we want to observe the property of a quantum system that can possibly take $n$ different values. If the test allows us to distinguish between $n$ possibilities, we say that it is \emph{maximal}. A $n$-outcome measurement of this property implements a maximal test. A test that gives only partial information about the measurement property is said to be \emph{partial}. If a quantum system is prepared in such a way that one can arrange a maximal test  that yields with certainty a particular outcome then we say that the quantum system is in a {\it pure state}. A pure state is described by a unit vector in a Hilbert space and it is denoted by $\vert \varphi \rangle$ in Dirac notation. If the maximal test for a pure state has $n$ possible outcomes,  the state is described by a vector $\vert \varphi \rangle$ in a $n$-dimensional Hilbert space. Any orthonormal basis represents a realizable maximal test. Suppose that we have an ensemble of similarly prepared systems and we test the values of
different measurable quantities (like spin, etc...). In general we postulate that, for an arbitrary ensemble state, it is always possible to
devise a test that yields the $n$ outcomes corresponding to an orthonormal basis with definite probability. If the system is prepared in the state $\vert \varphi \rangle$ and a maximal test corresponding to a basis $\{\vert e_1 \rangle \ldots \vert e_n \rangle\}$ is performed, then the probability that the outcome corresponds to $\vert e_i \rangle$ is given by $p_i(\vert \varphi \rangle) = \vert \langle e_i \vert \varphi \rangle \vert^2 $. When a quantum system is not in a pure state, quantum states are represented by probability distibutions of pure states, giving rise to the so called \emph{mixed states}.
Mixed states are represented by {\it density operators} in a Hilbert space, i.e. positive, self-adjoint trace class operators with trace equal to one. In terms of density operators, a pure state $\vert \psi \rangle$ is represented in Dirac notation as a matrix product $\vert \psi \rangle \langle \psi \vert$. Thus, with respect to pure states, density operators provide a more general description of quantum states. 

In quantum mechanics a compound system is represented as a tensor product of Hilbert spaces, each of them representing the individual parts of the system. Unlike classical physics, standard quantum mechanics systematically violates the above separability principle. From a mathematical point of view, the origin of this difference arises from the tensor product structure related to Hilbert spaces and from the superposition principle \cite{AERTS, DIECKS}. More precisely, if $\rho_1$ and $\rho_2$ are two density operators in the Hilbert spaces ${\cal H}_1$ and  ${\cal H}_2$ respectively, the state of the compound system is represented by $\rho = \rho_1 \otimes \rho_2$ in  ${\cal H}_1 \otimes {\cal H}_2$. But not all density operators on ${\cal H}_1 \otimes {\cal H}_2$ are expressible in this form. This non-factorizability of quantum states is related to the fact that the direct sum of ${\cal H}_1$ and ${\cal H}_2$ is a proper subset of ${\cal H}_1 \otimes {\cal H}_2$. This behavior may be considered as the mathematical root of the holistic feature of quantum mechanics. In fact, there exist properties of quantum systems that characterize the whole system but that are not reducible to the local properties of its parts. It should be noticed that the notion of tensor product motivates a different description of event structure for compound quantum systems \cite{DVU, FOU} with respect to the classical event structure.

In what follows we provide a formal description of these holistic features based on \emph{generalized Pauli matrices}. This approach turns out to be very useful to describe a holistic extension of classical logic in the quantum computation context.

Due to the fact that the Pauli matrices 
 \begin{equation*}
  \sigma_1 =
  \left[\begin{array}{cc}
    0 & 1 \\
    1 & 0
  \end{array}\right],
  \quad
  \sigma_2 =
  \left[\begin{array}{cc}
    0 & -i \\
    i & 0
  \end{array}\right],
  \quad
  \sigma_3 =
  \left[\begin{array}{cc}
    1 & 0 \\
    0 & -1
  \end{array}\right]
\end{equation*}
and $I$ are a basis for the set of operators over ${\mathbb{C}}^2$, an arbitrary density operator $\rho$ over ${\mathbb{C}}^2$ may be represented
as $$\rho=\frac{1}{2}(I+s_1\sigma_1+s_2\sigma_2+s_3\sigma_3)$$ where $s_1,s_2$ and $s_3$ are three real numbers such $s_1^2+s_2^2+s_3^2\le 1$. The triple $(s_1,s_2,s_3)$ represents the point of the Bloch sphere that is uniquely associated to $\rho$. A similar canonical representation can be obtained for any $n$-dimensional Hilbert space by using the notion of generalized Pauli-matrices.

\begin{definition}
{\rm Let $\mathcal H$ be a $n$-dimensional Hilbert space and $\{\ket{\psi_1},\ldots,\ket{\psi_n}\}$ be the canonical othonormal basis of $\mathcal H$. Let $k$ and $j$ be two natural numbers such that: $1\le k < j \le n$. Then, the {\it generalized Pauli-matrices} are defined as follows: 
 $$^{(n)}\sigma_1^{[k,j]}=
\ket{\psi_j}\bra{\psi_k}+
\ket{\psi_k}\bra{\psi_j}$$
 $$^{(n)}\sigma_2^{[k,j]}=
i(\ket{\psi_j}\bra{\psi_k}-\ket{\psi_k}\bra{\psi_j})$$
and for $1\le k \le n-1$
$$^{(n)}\sigma_3^{[k]}= \sqrt{\frac{2}{k(k+1)}}(\ket{\psi_1}\bra{\psi_1}+\cdots+\ket{\psi_k}\bra{\psi_k}-k\ket{\psi_{k+1}}\bra{\psi_{k+1}}).$$
}
\end{definition}

If $\mathcal H=\mathbb C^2$ one immediately obtains:  $^{(2)}\sigma_1^{[1,2]}= \sigma_1$, $^{(2)}\sigma_2^{[1,2]}= \sigma_2$ and $^{(2)}\sigma_3^{[1]}= \sigma_3.$

Let $\rho$ be a density operator of the $n$-dimensional Hilbert space ${\cal H}$. For any $j$, where $1\leq j \leq n^2-1$, let $$s_j(\rho) = tr(\rho \sigma_j) .$$ The sequence $\langle s_1(\rho)\ldots s_{n^2-1}(\rho) \rangle$ is called the {\it generalized Bloch vector} associated to $\rho$, in view of the following well known result {\rm \cite{SM}}:
let $\rho$ be a density operator of the $n$-dimensional Hilbert space ${\cal H}$ and let $\sigma_j$ be the generalized $n$-dimensional Pauli matrices. Then $\rho$ can be canonically represented as follows:
\begin{equation}\label{BLOCHVECT}
\rho = \frac{1}{n}I^{(n)} + \frac{1}{2}\sum_{j=1}^{n^2-1}s_j(\rho)\sigma_j
\end{equation}
where $I^{(n)}$ is the $n\times n$ identity matrix.

A kind of converse of the above result reads: a matrix $\rho$ having the form $\rho = \frac{1}{n}I^{(n)} + \frac{1}{2}\sum_{j=1}^{n^2-1}s_j(\rho)\sigma_j $ is a density operator if an only if its eigenvalues are non-negative.
By using generalized Pauli matrices, it will be possible to formally describe a notion of holism for bipartite states. In fact, by following the Schlienz-Mahler decomposition \cite{SM}, we show as any quantum bipartite state can be expressed as a sum of a factorizable state plus another quantity that represents a kind of holistic component.

Let us consider the Hilbert space ${\cal H} = {\cal H}_a \otimes {\cal H}_b$. For any density operator $\rho$ on $\cal H$, we denote by 
$\rho_a$ the partial trace of $\rho$ with respect to the system ${\cal H}_b$ (i.e. $\rho_a = tr_{{\cal H}_b}(\rho)$) and by $\rho_b$ the partial trace of $\rho$ with respect to the system ${\cal H}_a$ (i.e. $\rho_b = tr_{{\cal H}_a}(\rho)$). For the next  developments it is useful to recall the following technical result: 

let $\rho $ be a density operator in a $n$-dimensional Hilbert space ${\cal H} = {\cal H}_a \otimes {\cal H}_b$ where $dim({\cal H}_a) = m$ and $dim({\cal H}_b) = k$.  If we divide $\rho$ in $m \times m$ blocks $B_{i,j}$, each of them is a $k$-square matrix, then:

\begin{eqnarray}\label{TRACE}
\rho_a & = &tr_{{\cal H}_b}(\rho)  =\left[\begin{array}{cccc}
      tr B_{1,1} & tr B_{1,2} & \ldots & tr B_{1,m}  \\
      tr B_{2,1} & tr B_{2,2} & \ldots & tr B_{2,m}  \\
      \vdots & \vdots & \vdots & \vdots  \\
      tr B_{m,1} & tr B_{m,2} & \ldots & tr B_{m,m}   \\
          \end{array}\right] \\
\rho_b & = &tr_{{\cal H}_a}(\rho) =\sum_{i=1}^m B_{i,i}.
\end{eqnarray}

\begin{definition}
{\rm Let $\rho $ be a density operator in a Hilbert space ${\cal H}_m \otimes {\cal H}_k$ such that $dim({\cal H}_m) = m$ and $dim({\cal H}_k) = k$. Then $\rho$ is said to be {\it $(m,k)$-factorizable} iff $\rho = \rho_m \otimes \rho_k$ where $\rho_m$ is a density operator in ${\cal H}_m$ and $\rho_k$ is a density operator in ${\cal H}_k$.   
}
\end{definition}

It is well known that, if $\rho$ is $(m,k)$-factorizable as $\rho = \rho_m \otimes \rho_k$, this factorization is unique and $\rho_m$ and $\rho_k$ correspond to the reduced states of $\rho$ on ${\cal H}_m$ and ${\cal H}_k$, respectively \cite{VON55}.\\

Suppose that ${\cal H} = {\cal H}_a\otimes {\cal H}_b$ where $dim({\cal H}_a) = m$ and $dim({\cal H}_b) = k$. Let us consider  the generalized Pauli matrices $\sigma_1^a,\ldots, \sigma_{m^2-1}^a $ and $\sigma_1^b,\ldots, \sigma_{k^2-1}^b $ arising from  ${\cal H}_a$ and ${\cal H}_b$, respectively. 

If we define the following coefficients: $$M_{j,l}(\rho) = tr(\rho [\sigma_j^a \otimes \sigma_l^b]) - tr(\rho [\sigma_j^a \otimes I^{(k)}])tr(\rho [I^{(m)}\otimes\sigma_l^b])$$ and if we consider the matrix ${\bf M}(\rho)$ defined as $${\bf M}(\rho)=\frac{1}{4} \sum_{j=1}^{m^2-1}\sum_{l=1}^{k^2-1} M_{j,l}(\rho)(\sigma_j^a \otimes \sigma_l^b)$$ then  ${\bf M}(\rho)$ represents the ``additional component" of $\rho$ when $\rho$ is not a factorized state.
In this way, if $\rho$ is a density operator in ${\cal H} = {\cal H}_a\otimes {\cal H}_b$, then
\begin{equation}\label{decomposition}
\rho = \rho_a \otimes \rho_b + {\bf M}(\rho).
\end{equation} 

The above result gives a formal representation of the instance of holism mentioned at the beginning of the section. In fact, a state $\rho$ in  
${\cal H}_a\otimes {\cal H}_b$ does not only depend on its reduced states $\rho_a$ and $\rho_b$, but also the summand ${\bf M}(\rho)$ is involved. Let us notice that ${\bf M}(\rho)$ is not a density operator and then it does not represent a physical state. We refer to ${\bf M}(\rho)$ as the \emph{holisitc component} of $\rho$.

\section{Quantum computation with mixed states}\label{QLMS}

As anticipated in the Introduction, we now provide some basic notions of quantum computing. 
In quantum computation, information is elaborated and processed by
means of quantum systems. 
A \textit{quantum bit} or \textit{qubit}, the fundamental concept of
quantum computation, is a pure state in the Hilbert space
${\mathbb{C}}^2$. The standard orthonormal basis $\{ \vert 0 \rangle ,
\vert 1 \rangle \}$ of ${\mathbb{C}}^2$, where $\vert 0 \rangle = (1,0)^\dagger$
and $\vert 1 \rangle = (0,1)^\dagger$, is generally called
\textit{logical basis}. This name refers to the fact that the logical truth is
related to $\vert 1 \rangle$ and the falsity to $\vert 0 \rangle$. Thus,
pure states $\vert \psi \rangle$ in ${\mathbb{C}}^2$ are coherent
superpositions of the basis vectors $\vert
\psi \rangle = c_0\vert 0 \rangle + c_1 \vert 1 \rangle$ where $c_0$ and $c_1$ are complex numbers such that $\vert
c_0 \vert^2 + \vert c_1 \vert^2 = 1$. 
Recalling the Born rule, any qubit $\vert \psi \rangle = c_0\vert 0
\rangle + c_1 \vert 1 \rangle$ may be regarded as a piece of
information, where the number $\vert c_0 \vert ^2$  corresponds to
the probability-value of the information described by the basic
state $\vert 0 \rangle$; while $\vert c_1 \vert ^2$  corresponds to
the probability-value of the information described by the basic
state $\vert 1 \rangle$. The two basis-elements $\vert 0 \rangle$
and $\vert 1 \rangle$ are usually taken as the encoding of the classical
bit-values $0$ and $1$, respectively. In this way, the qubit probability value we are interested on, is $p(\vert \psi
\rangle) = \vert c_1 \vert ^2$ that is related to the basis vector
associated with truth.

Quantum states considered in quantum computation, live in the
tensor product ${\otimes^n} {\mathbb{C}}^2 = {\mathbb{C}}^2 \otimes
{\mathbb{C}}^2 \otimes \ldots \otimes {\mathbb{C}}^2$ ($n$ times), that is a
$2^n$-dimensional complex space. A special basis,   called the
$2^n$-{\it computational basis}, is chosen for ${\otimes^n}
{\mathbb{C}}^2$. More precisely, it consists of the $2^n$ orthogonal
states $\vert \iota \rangle$, $0 \leq \iota \leq 2^n$ where $\iota$
is in binary representation and $\vert \iota \rangle$ can be seen as
tensor product of states (Kronecker product) $\vert \iota \rangle =
\vert \iota_1 \rangle \otimes \vert \iota_2 \rangle \otimes \ldots
\otimes \vert \iota_n \rangle$, whit $\iota_j \in \{0,1\}$. Then, a pure
state $\vert \psi \rangle \in {\otimes^n}{\mathbb{C}}^2$ is  a
superposition of the basis vectors $\vert \psi \rangle = \sum_{\iota
= 1}^{2^n} c_{\iota}\vert \iota \rangle$, with $\sum_{\iota =
1}^{2^n} \vert c_{\iota} \vert^2 = 1$.

In the usual representation of quantum
computational processes, a quantum circuit is identified with an
appropriate composition of {\it quantum gates}, mathematically
represented by \textit{unitary operators} acting on pure states of a
convenient ($n$-fold tensor product) Hilbert space ${\otimes^n}
{\mathbb{C}}^2$ \cite{NIC}. In other words, the standard model for
quantum computation is mathematically based on
``\textit{qubits-unitary operators}''.

As we said in Section \ref{HOL}, in general, a quantum system is not in a pure state. Moreover, there are interesting processes that
cannot be encoded by unitary evolutions. For example, the measurement at the end of the computation is a non-unitary operation, 
and the final state becomes a probability distribution over pure states i.e., a mixed state. 

In view of these facts,
several authors \cite{AKN, DF, FSA, GUD1, TA} have paid attention to a more general model of quantum computational processes, where pure
states are replaced by mixed states. In what follows we give a short description of this powerful model for quantum computers based on mixed states, which is better suited to our development.

As a particular case, we may associate to each vector of
the logical basis of ${\mathbb{C}}^2$ two density operators $P_0 = 
\vert 0 \rangle \langle 0 \vert$ and $P_1 = \vert 1 \rangle \langle 1
\vert$ that represent, in this framework, the falsity-property and the truth-property,
respectively.  Let us consider the operator $P_1^{(n)} =
\otimes^{n-1} I \otimes P_1$ on ${\otimes^n}{\mathbb{C}}^2$. 
By
applying the Born rule, we shall consider the probability of a density
operator $\rho$ as follows:
\begin{equation} \label{PROBDEF}
  p(\rho) = Tr(P_1^{(n)} \rho).
\end{equation}
Note that, in the particular case in which $\rho = \vert \psi \rangle
\langle \psi \vert$, where $\vert \psi \rangle = c_0\vert 0 \rangle +
c_1 \vert 1 \rangle$, we obtain that $\mathtt{p}(\rho)= \vert c_1 \vert
^2$. Thus, this probability value associated to $\rho$ is the
generalization of the probability value considered for
qubits.

A \textit{quantum operation} \cite{K} is a linear operator ${\cal
  E}:{\cal L}(H_1)\rightarrow {\cal L}(H_2)$ where ${\cal L}(H_i)$ is
the space of linear operators in the complex Hilbert space $H_i$ ($i=
1, 2$), representable as ${\cal E}(\rho)=\sum_{i}A_{i}\rho
A_{i}^{\dagger }$ where $A_i$ are operators satisfying
$\sum_{i}A_{i}^{\dagger }A_{i}=I$ (Kraus representation). It can be
seen that a quantum operation maps density operators into density
operators. Each unitary operator $U$ has a natural correspondent quantum operation ${\cal O}_{\cal U}$ such that, for each density operator $\rho$, ${\cal O}_{\cal U}(\rho) =
{\cal U}\rho {\cal U}^{\dagger}$ . In this way, quantum operations are generalizations of unitary operators.
It provides a powerfull model for quantum computation in which irreversible  porcesses can be also considered. This model based on density operators and quantum
operations  is known as ``\textit{quantum computation with mixed
states}'' (\cite{AKN, TA}).

\section{Quantum computational logics}\label{QcL}

An holistic extensions for classical logic in quantum computing, announced as the main goal of this paper, is fully supported in the formalism of quantum computation with mixed states. This naturally suggests a kind of quantum logical system related to quantum computation that allows us to achieve the holistic extension mentioned above. As expected, this logical system will be substantially different than standard Birkhoff-von Neumann quantum logic \cite{BvN}. In this section we show the differences between these two logics.

According to von Neumann's axiomatization, quantum events are mathematically realized by projectors of a Hilbert space ${\cal H}$. Hence, any experimental proposition concerning a quantum system corresponds to a projector in a convenient Hilbert space.
Closed subspaces of ${\cal H}$ are in one-to-one correspondence with the class of all projectors of $\mathcal{H}$
and they form an algebra called {\it Hilbert lattice} (denoted by $L(\mathcal{H}))$. In any Hilbert lattice the meet operation $\wedge$ corresponds to the set theoretical intersection between subspaces and the join operation $\vee$ corresponds to the smallest closed subspace of $\mathcal{H}$ containing the set theoretical union of subspaces. The ordering relation associated to the lattice $\mathcal{L}(\mathcal{H})$ is the inclusion of subspaces. Note that $L(\mathcal{H})$ is a bounded lattice where $\mathcal{H}$ is the \emph{maximum}, denoted by $1$, while $0$ denotes the \emph{minimum}, i.e., the subspace containing only the origin.
This lattice, equipped with the relation of orthogonal complement $^\bot$, can be described as an {\it ortholattice} \cite{KAL}. Then, the propositional structure that defines the standard quantum logic proposed  by Birkhoff and von Neumann, is given by 
the ortholattice $\langle {L(\mathcal{H})}, \lor, \land, ^\bot, 1,0 \rangle$. 
Let us notice that, unlike classical logic, in this structure the distributive law fails.
 However $L(\mathcal{H})$ satisfies a kind of weak distributivity. In case of a finite-dimensional Hilbert space $\mathcal{H}$, the ortholattice $L(\mathcal{H})$ is modular, i.e. satisfies the following condition known as {\it modular law}: $x\leq y \Longrightarrow x\lor (y\land z) = y \land (x\lor z) $. In the case of an infinite-dimensional Hilbert space, the modular law is not satisfied. In 1937, K. Husimi \cite{husimi73} showed that a weaker law, the so called {\it orthomodular law} ($x\leq y \Longrightarrow x\lor (x^\bot \land y) = y$), is satisfied in the ortholattice $L(\mathcal{H})$.

Quantum computation motivates several types of quantum logics. A family of these logics, deeply investigated in \cite{DUNN,DMW, HAG, GFS}, deals with qubits, following the ideas of Birkhoff and von Neumann. More precisely, this family of logics examines the orthostructure of the Hilbert lattices  $L(\otimes^n\mathbb{C}^2)$ of the space of $n$-dimensional qubits. In this way, the equational theories of $L(\otimes^n\mathbb{C}^2)$ are investigated in relation with the dimension $n$.

Differently, other quantum computational logical systems arise from the combinational structure associated to a set of quantum gates. They are defined taking into account algebraic properties of quantum operations acting on density operators belonging to $\otimes^n\mathbb C^2$. In what follows, we introduce a brief description of a type of logic associated to quantum circuits that provides the framework for our holistic extension of the classical logic. A problem, usually treated in classical computation and more precisely in digital techniques, is the following: 

\begin{itemize}
\item[]
{\it if $T$ is a combinational circuit, we want to know whether a given input state of $T$, represented by a string of bits $0$ and $1$, forces a determinate output state of $T$ given by a bit, that could be either $0$ or $1$}. 
\end{itemize}

As a general rule, this problem can be solved through effective procedures based on classical logic. Then, one may naturally extend this problem by considering circuits made from assemblies of fixed set of quantum gates. In this way, the input and the output of quantum circuits are labeled by density operators and possible notions of logical consequence are defined by relations between the input and the output of the circuits. Several families of quantum computational logics arise from these extensions \cite{DF, GUD1, LS}. Each of these logics are related to a fixed set of quantum gates and  
they have a common semantic based on probability-values, as introduced in Eq.(\ref{PROBDEF}). More precisely, a language for a quantum computational logic is a propositional language ${\cal L}_{\mathfrak{F}}(X)$ where $X$ is a non-empty set of variables and $\mathfrak{F}$ is a set of  connectives. Propositional variables are interpreted in a set ${\cal D}$ of density operators and for each connective $f\in \mathfrak{F}$, $f$ is naturally interpreted as a quantum operation $U_f$ closed on ${\cal D}$. An interpretation of ${\cal L}_{\mathfrak{F}}(X)$ in ${\cal D}$ is any function $e: {\cal L}_{\mathfrak{F}}(X) \rightarrow {\cal D}$ such that,  for each $f\in \mathfrak{F}$ with arity $k$, $e(f(x_1, \ldots , x_k)) = U_f(e(x_1), \ldots , e(x_k))$. To define a relation of semantical consequence $\models$ based on the probability assignment, it is necessary to introduce the notion of valuations. In fact, {\it valuations} are functions over the unitary real interval $v:{\cal L}_{\mathfrak{F}}(X) \rightarrow [0,1]$ such that $f$ can be factorized in the following way:

\begin{center}
\unitlength=1mm
\begin{picture}(20,20)(0,0)
\put(8,16){\vector(3,0){5}}
\put(2,10){\vector(0,-2){5}}
\put(10,4){\vector(1,1){7}}

\put(2,10){\makebox(13,0){$\equiv$}}

\put(2,16){\makebox(0,0){${\cal L}_{\mathfrak{F}}(X)$}}
\put(20,16){\makebox(0,0){$[0,1]$}}
 \put(2,0){\makebox(0,0){${\cal D}$}} \put(2,20){\makebox(17,0){$v$}} \put(2,8){\makebox(-6,0){$e$}}
\put(18,2){\makebox(-4,3){$p$}}
\end{picture}
\begin{equation}\label{QCLM}
\end{equation}
\end{center}

Since an interpretation always determines a valuation, for each interpretation $e$, we denote by $e_p$ the valuation associated to $e$. The abstract notion of semantical consequence $\models$ related to ${\cal D}$ is given by: $$\alpha \models \varphi \hspace{0.3cm}  iff \hspace{0.3cm} {\cal R}[v(\alpha), v(\varphi)]$$ where ${\cal R} \subseteq [0,1]^2$ is a reflexive and transitive relation. Note that, the natural extension of classical logical consequence can be formulated as follows: 

\begin{equation}\label{LOGCONSEQUENCE}
\alpha \models \varphi \hspace{0.3cm}  iff \hspace{0.3cm} e_p(\alpha) = 1 \Longrightarrow e_p(\varphi) = 1. 
\end{equation}

These quantum logical systems are known as {\it Quantum computational logic} \cite{DGG,GUD1,DF}. 

We can establish a comparison between the Birkhoff-von Neumann quantum logic and the quantum computational logic associated to Hilbert spaces of the form $\otimes^n\mathbb{C}^2$. Basically, Birkhoff-von Neumann quantum logic interprets propositions as closed subspaces of $\otimes^n\mathbb{C}^2$ and connectives as operations of its natural orthostructures. On the other hand, quantum computational logic with mixed states interprets propositions as $n$-qbits (where each $n$-qbit is related to one dimensional sub space of $\otimes^n\mathbb{C}^2$) and connectives as quantum operations acting on $\otimes^n\mathbb{C}^2$.

\section{A fuzzy extension for classical logic in\\ quantum computation}\label{FE}

Quantum computational logic systems can be framed as generalizations of probabilistic logics. {\it Probabilistic logics} is the name that Adams \cite{AD} proposed for the formal investigation on the transmission of probability values thorough valid inferences. This idea can be generalized by considering non-Kolmogorovian probability models as it happens in the case of quantum computational logics, whose semantic is based on the Born rule. Thus, if ${\cal L}_{\mathfrak{F}}(X)$ is a language associated to quantum computational logic system, then the probabilistic semantic for ${\cal L}_{\mathfrak{F}}(X)$ assumes its truth value in the continuous $[0,1]$. It also suggests a strong relation between quantum computational logic and fuzzy logical systems. In this section a fuzzy extension for classical propositional logic coming from quantum computation is introduced. It provides the underling formalism for the holistic extension for the classical propositional logic in quantum computation developed in Section \ref{HTE}.

In a general case, for a quantum computational logical system that extended the classical logic, it is quite natural to require the following condition: 
\begin{itemize}
\item[]
{\it once fixed a language ${\cal L}_{\mathfrak{F}}(X)$, the elements of the set  $\mathfrak{F}$ have to be interpreted as quantum operations that are able to fully describe, from the  truth-functionally point of view, classical logic.}
\end{itemize}

In other words, the set of connectives $\mathfrak{F}$, restricted to the classical truth values $\{0,1\}$, is functionally complete\footnote{We say that a set of classical connectives is {\it functionally complete} if it is sufficient to express every truth-function.} with respect to propositional classical logic. Functional completeness, besides being an important logical property, turns out to be crucial also for technological applications. A paradigmatic case is represented by the digital techniques where logical gates can be represented by propositional connectives and  circuits by propositional formulas. For technical reasons (standardization of integrated circuits, energy optimization) sometimes it is necessary to built circuits by using a restricted set of logical gates.
We focus our attention on the set $\langle \neg, \land  \rangle$ which is functionally complete  for classical logic. Thus, by induction, a logical system $\langle \neg, \land  \rangle$ can represent all truth-functions of classical logic. However, the set $\langle \neg, \land  \rangle$ could not be functionally complete for some extension of classical logic. The rest of this section is devoted to investigate a natural extension of $\langle \neg, \land  \rangle$ to quantum  computational logic with mixed states. The mentioned extension will be built from
a logical system equipped with only one connective, semantically interpreted as the well known Toffoli quantum gate.

\begin{definition}
{\rm For each density operator $\rho$ in $\otimes^m{\mathbb{C}^2}$ the negation $\mathbb {NOT}^{(2^m)}(\rho)$ is define as follows:
$$\mathbb {NOT}^{(2^m)}(\rho) =  (I^{(2^{m-1})}\otimes NOT) \hspace{0.1cm} \rho \hspace{0.1cm}(I^{(2^{m-1})}\otimes NOT)$$ where 
$NOT = \left[\begin{array}{cc}
    0 & 1 \\
    1 & 0
  \end{array}\right].$
}
\end{definition}

In \cite{DGG}  it is proved that 
\begin{equation} \label{PROBNEG}
p(\mathbb {NOT}^{(2^m)}(\rho)) = 1 - p(\rho).
\end{equation}

An extension of the classical conjunction can be implemented via {\it Toffoli gate}.  It was introduced by Tommaso Toffoli \cite{TOF} and it is represented by the ternary classical connective $T(x,y,z) = (x,y, xy \widehat{+} z) $ where $\widehat{+}$ is the sum modulo $2$. When $z=0$, $T(x,y,0)$ reproduces the classical conjunction. Toffoli gate is natural extended to qubits in the following way.

For any natural numbers $m,k \geq 1$ and for any vectors of the standard orthonormal basis $\ket{x} = \ket{x_1\ldots x_m } \in \otimes^m{\mathbb{C}^2} $
, $\ket{y} = \ket{y_1 \ldots y_k} \in \otimes^k{\mathbb{C}^2} $ and $\ket{z} \in {\mathbb{C}^2} $ , the Toffoli gate $T^{(m,k,1)}$ on $\otimes^{m + k + 1}{\mathbb{C}^2}$ is defined as follows: $$T^{(m,k,1)}(\ket{x} \otimes \ket{y}\otimes \ket{z}) = \ket{x} \otimes \ket{y}\otimes \ket{x_m y_k \widehat{+} z}.$$   

By {\rm \cite[Proposition 3.1]{FS}},
for any natural number $m,k \geq 1$, $T^{(m,k,1)}$ is a unitary operator whose matrix representation  is given by

\begin{eqnarray} \label{MATTOF}
T^{(m,k,1)} & = & I^{(2^{m+k + 1})} +  P_1^{(2^m)} \otimes P_1^{(2^k)} \otimes (Not - I)  \\
& = &  I^{(2^{m-1})} \otimes \left[
\begin{array}{c|c}
I^{(2^{k+1})} & {\bf 0}  \\ \hline
{\bf 0}  & I^{(2^{k-1})} \otimes Xor   
\end{array}\right]
\end{eqnarray}

where $Xor = \left[
\begin{array}{cccc}
1 & 0 & 0 &0  \\ 
0 & 1 & 0 &0  \\
0 & 0 & 0 &1  \\     
0 & 0 & 1 &0
\end{array}\right]
$.

\bigskip

$T^{(m,k,1)}$ allows us to extend the classical conjunction as follows.

\begin{definition}
{\rm Let $\rho^m$ be a density operator in $\otimes^m\mathbb C^2$ and $\rho^k$ be a density operator in $\otimes^k\mathbb C^2.$
We define: $$\mathbb {AND}^{(m,k)}(\rho^m\otimes\rho^k) = T^{(m,k,1)}(\rho^m\otimes\rho^k\otimes P_0)T^{(m,k,1)}.$$
}
\end{definition}

In \cite{DGG} it is proved that  
\begin{equation}\label{PROBAND}
p(\mathbb {AND}^{(m,k)}(\rho^m\otimes\rho^k))=p(\rho^m)p(\rho^k).
\end{equation}

Let us consider the set ${\cal D}_n$ of all density operators on $\otimes^n{\mathbb{C}^2}$. It is very important to remark that $\mathbb {AND}^{(m,k)}$ can be seen as a binary operator of the form 

\begin{equation}\label{2ary}
\mathbb {AND}^{(m,k)}:{\cal D}_m \times {\cal D}_k \rightarrow {\cal D}_{m+k+1}.
\end{equation}

In order to define a quantum computational logical system in the sense of Section \ref{QLMS} and based on $\{ \mathbb {AND}^{(-,-)}, \mathbb {NOT}^{(2^-)} \}$, we consider the set ${\cal D} = \bigcup_n {\cal D}_n$ and we introduce the binary connective $\mathbb {AND}$ and the unary connective $\mathbb {NOT}$ in ${\cal D}$ as
\begin{enumerate}
\item[] $\mathbb {AND}(\rho, \sigma) = \mathbb {AND}^{(m,k)}(\rho \otimes\sigma)$ iff, $\rho \in {\cal D}_m$ and $\sigma \in {\cal D}_k$

\item[] $\mathbb {NOT}(\rho) = \mathbb {NOT}^{(2^m)}(\rho)$ iff $\rho \in {\cal D}_m$.
\end{enumerate}

Note that $\mathbb {AND}$ and $\mathbb {NOT}$ are closed operations in ${\cal D}$. Thus, these operations define a quantum computational logical system in the sense of Section \ref{QLMS} that we shall denote as ${\cal QC}_{AN}$. By Eq (\ref{PROBNEG}) and Eq (\ref{PROBAND}) it is immediate to see that

\begin{equation}\label{COMPSIST}
{p(\mathbb {NOT}(\rho)) = 1-p(\rho)}, \hspace{1cm} { p(\mathbb {AND}(\rho, \sigma)) = p(\rho) p(\sigma).}
\end{equation}

From a probabilistic point of view,  $\mathbb {NOT}^{(2^m)}$ gate can be described as an instance of Toffoli gate. In fact, by \cite[Theorem 3.1]{FS}, for each density operator $\rho$ in $\otimes^m\mathbb C^2$ we can easily see that $$p(\mathbb {NOT}^{(2^m)}(\rho)) = p(T^{(m,k,1)}(\rho\otimes P_1^{(k)}\otimes P_1)T^{(m,k,1)}).$$ Thus $\mathbb {AND}$ and $\mathbb {NOT}$ can be considered as two particular instance of Toffoli gate. Consequently, ${\cal QC}_{AN}$ can be sees as a logic construction arising from Toffoli gate only.

In the particular case in which $p(\rho)$ and $p(\sigma)$ are $1$ or $0$, these quantum gates behave as the classical negation and conjunction, respectively. In this way, ${\cal QC}_{AN}$ provides an extension of the classical propositional logic. 

It is possible to characterize the subset of ${\cal D}$ for which the set of connectives $\{\mathbb {NOT}, \mathbb {AND}\} $ behaves classically. In fact: let $\rho \in {\cal D}_n$ and suppose that the diagonal of $\rho$ is given by $diag(\rho) = \{r_{1,1}, r_{2,2} \ldots r_{2^n,2^n}\}$. Note that $p(\rho) \in \{0,1\}$ iff $\sum_{i=1}^{2^{n-1}} r_{2i, 2i} \in \{0,1\}$. If we define the set $${\cal D}_n^{class} = \{ \rho = (r_{i,j})_{1\leq i,j \leq 2^n} \in {\cal D}_n : \sum_{i=1}^{2^{n-1}} r_{2i, 2i}  \in \{0,1\} \}$$ then      
\begin{equation} \label{classic}
{\cal D}^{class} = \bigcup_{n} {\cal D}_n^{class}
\end{equation} \label{classic} is the subset of ${\cal D}$ in which $\{\mathbb {NOT}, \mathbb {AND}\} $ behaves classically. \\

${\cal QC}_{AN}$ is strongly related to the {\it Basic fuzzy logic} introduced by P. H\'ajek at the end of the 1990s \cite{HAJ}. This kind of fuzzy logic is conceived as a theory of the approximate reasoning based on many-valued logic systems. Basic fuzzy logic is the logic associated to {\it continuous $t$-norms} i.e., continuous, commutative, associative, and monotone binary operations on $[0,1]$ with 1 as the neutral element. These operations are taken as possible truth-functions of conjunctions in these systems. Each continuous $t$-norm  determines a semantic of fuzzy propositional logic. For example the {\it \L ukasiewcz $t$-norm} $x \odot_{\L} y = \max\{0, x+y -1\}$ defines the conjunction of the \L ukasiewcz infinite many valued logic, where $\neg_{\L}x = 1-x$ is the negation in this logic. The {\it product $t$-norm} $x \odot_p y = x y$ defines the conjunction of the Product logic \cite{CT} and the G\"{o}del $t$-norm $x \odot_G y = \min\{x,y\}$ defines the conjunction of the linear Heyting logic. For the sake of simplicity, in the calculations, we use the common product $x y$ to indicate the product $t$-norm $x \odot_p y$.   

Since $p(\mathbb {NOT}(\rho)) = 1-p(\rho)$, we can identify $\mathbb {NOT}$ with the \L ukasiewicz negation and since $p(\mathbb {AND}(\rho, \sigma)) = p(\rho) p(\sigma)$, $\mathbb {AND}$ can be identified with the product $t$-norm. Thus, from a strictly semantic point of view, we can establish the following identification: 
\begin{equation}\label{IDENTIFICATION}
\{\mathbb {NOT}, \mathbb {AND}\} \stackrel{\approx}{\mbox{ {{\footnotesize semantic}}}} \{\neg_{\L}, \odot_p\}.
\end{equation}

We remark that connectives $\{\neg_{\L}, \odot_p\}$ define a multiplicative fragment of the fuzzy logical system known as {\it product many valued logic or $PMV$-logic}, studied in  \cite{DD, MONT}.

This semantic connection between two logical systems is even deeper and it is formally rooted in the equivalence relation on ${\cal D}$ given by   
\begin{equation}\label{EQUIVD}
\rho \approx \sigma \hspace{0.5cm} \mbox{iff} \hspace{0.5cm} p(\rho) = p(\sigma). 
\end{equation}
It is not very hard to see that, the quotient set ${\cal D}/_{\approx}$ can be identified to the real interval $[0,1]$ and $\approx$ is a congruence with respect to $\{\mathbb {NOT}, \mathbb {AND}\}$. Thus, both operations naturally induce two operations over the equivalence classes in ${\cal D}/_{\approx}$ given by ${\mathbb {NOT}}_{\approx}([\rho]) = [{\mathbb {NOT}}(\rho)]$ and ${\mathbb {AND}}_{\approx}([\rho], [\sigma]) = [{\mathbb {AND}}(\rho,\sigma)]$.  Then, the algebraic structures $\langle {\cal D}_{\approx}, \mathbb {NOT}_{\approx}, \mathbb {AND}_{\approx}\rangle$ and $\langle [0,1], \neg_{\L}, \odot_p \rangle$ coincide and they induce the same algebraic semantic for both logical systems. As a consequence, the natural $\{\mathbb {NOT}, \mathbb {AND}\}$-homomorphism $\pi: {\cal D} \rightarrow {\cal D}/_{\approx} = [0,1]$ is identifiable with the assignment of probability in ${\cal QC}_{AN}$.  In this way, ${\cal QC}_{AN}$ is semantically related to Basic fuzzy logic providing a fuzzy extension for the propositional classical logic in quantum computation with mixed states. \\

In classical logic, concepts of contradiction and tautology can be syntactically represented in terms of $\{ \neg, \land \}$. Contradictions are those formulas equivalent to $p \land \neg p$ and tautologies are those formulas equivalent to $\neg (p \land \neg p)$. From these facts, the formula
$p \land \neg p$ is sometimes refereed as {\it syntactic contradiction} and $\neg (p \land \neg p)$ (more precisely the equivalent form $p\lor \neg p$) is refereed as {\it syntactic tautology}. In this work we accord with this terminology.

In ${\cal QC}_{AN}$, a syntactic representation for contradictions and tautologies is lost. This fact can be explained taking into account that real numbers do not contain zero divisors. Then, there is not an algebraic expression built from $\{\neg_{\L}, \odot_p\}$ that produces the constant functions $1$ or $0$. Hence, by the semantic identification given in Eq (\ref{IDENTIFICATION}),  there does not exist a formula in the language of $\{\mathbb {AND}, \mathbb{NOT}\}$ that produces a contradiction or a tautology in ${\cal QC}_{AN}$. However, the ${\cal QC}_{AN}$-extensions of the syntactic contradiction and the syntactic tautology, have interesting properties. The ${\cal QC}_{AN}$ are:

\begin{itemize}

\item[]
$p \land \neg p \hspace{0.8cm} \stackrel{{{\cal QC}_{AN}}}{\longrightarrow} \hspace{0.3cm} {\mathbb {AND}}(\rho, {\mathbb {NOT}}\rho)$ \hspace{1.25cm} {\footnotesize [syntactic contradiction]},

\item[]
$\neg (p \land \neg p) \hspace{0.3cm} \stackrel{{{\cal QC}_{AN}}}{\longrightarrow} \hspace{0.3cm}  {\mathbb {NOT}}({\mathbb {AND}}(\rho, {\mathbb {NOT}}\rho))$ \hspace{0.2cm} {\footnotesize [syntactic tautology]}.

\end{itemize}

Since ${\mathbb {NOT}}$ is an involution on ${\cal D}$, the ${\cal QC}_{AN}$-extension of the syntactic contradiction and ${\cal QC}_{AN}$-extension of the syntactic tautology are dual concepts. Thus, for sake of simplicity, we can focus our attention on the notion of contradiction only. 
By Eq(\ref{IDENTIFICATION}) we can see that:  
\begin{equation}\label{CONTRADICTION}
p({\mathbb {AND}}(\rho, {\mathbb {NOT}}\rho)) = p(\rho)(1-p(\rho)) \leq \frac{1}{4}.
\end{equation}
Thus, $p({\mathbb {AND}}(\rho, {\mathbb {NOT}}\rho)) = 0$ iff $p(\rho) \in \{0,1\}$. In other words, the fuzzy extension of the classic syntactic contraddiction ${\mathbb {AND}}(-, {\mathbb {NOT}}(-))$ has a classical behaviour over the set ${\cal D}^{class}$ only.

\section{An holistic type extension for classical logic}\label{HTE}

Quantum computational logic with mixed states can also provide an interesting holistic type extension for the classical propositional logic. This extension arises when non factorizable states are considered as inputs in the Toffoli quantum gate. We will also note that, the fuzzy system $\{\neg_{\L}, \odot_P \}$ plays an important role for describing the mentioned holistic extension.

The formal language in which classical logic and most logical systems are expressed, is regulated by strict syntax rules. 
The basic idea at the origin of these languages is the fact that each proposition or formula can be built by means of a recursive procedure from a distinguished set of propositions, called atomic propositions. In this way, complex propositions are recursively obtained  from atomic propositions assembled by connectives. For each connective a natural number, the {\it arity}, is assigned. The arity defines the number of propositions that the connectives assemble. When an algebraic semantic for these logical systems is considered, an  $n$-ary connective is interpreted as an algebraic operation having $n$ arguments. Thus, the arity is an invariant property associated to a connective. All these ideas was already taken into accout in ${\cal QC}_{AN}$, where separability conditions of the states were considered. More precisely, $\mathbb {AND}^{(m,k)}$ is viewed as a $2$-ary connective in the ideal case in which a factorizable state of the form $\rho_m \otimes \rho_k$ is considered as input.  

In general, of course, this is not the case. Quantum systems continually interact with the  environment, building up correlations. For a more realistic approach, we can assume that the input of the $\mathbb {AND}^{(m,k)}$ can be also a non-factorizable mixed state $\rho$ in  $\otimes^{m+k}\mathbb C^2$ taking into account its holistic type representation given in Eq.(\ref{decomposition}) i.e. $$\rho = \rho_m \otimes \rho_k + {\bf M}(\rho)$$ where $\rho_m$ and $\rho_k$ are the reduced states of $\rho$ in $\otimes^{m}\mathbb C^2$ and $\otimes^{k}\mathbb C^2$ respectively. 

Differently, with respect to Eq.(\ref{2ary}), when non factorized states are taken into account, $\mathbb {AND}^{(m,k)}$ behaves as a unary operator of the form $ \mathbb {AND}^{(m,k)}: {\cal D}_{m+k} \rightarrow {\cal D}_{m+k+1}$. This behavior of $\mathbb {AND}^{(m,k)}$ motivates an holistic type extension of classical conjunction.  The following definition formally introduce an operator that describes the unary behavior of $\mathbb {AND}^{(m,k)}$.

\begin{definition}\label{HOLAND}
 For any density operator $\rho\in\otimes^{m+k}\mathbb C^2$ we define: $${\mathbb {AND}^{(m,k)}_{Hol}}(\rho)=T^{(m,k,1)}(\rho\otimes P_0)T^{(m,k,1)}.$$
\end{definition}

For sake of simplicity, we use the following notation: if $\rho$ is a density operator in $\otimes^{m+k}\mathbb C^2$ then ${\mathbb {T}}_p^{(m,k,1)}(\rho)$
denotes the matrix $${\mathbb {T}}_p^{(m,k,1)}(\rho) = P_1^{2^{m+k+1}}(T^{(m,k,1)}({\bf M}(\rho) \otimes P_0)T^{(m,k,1)}). $$ Then, by Eq.(\ref{decomposition}) and Eq.(\ref{PROBAND}) follows that if $\rho$ is a density operator in $\otimes^{m+k}\mathbb C^2$ and $\rho_m$, $\rho_k$ are the reduced states of $\rho$ in $\otimes^{m}\mathbb C^2$ and $\otimes^{k}\mathbb C^2$, respectively, then:   

\begin{equation}\label{HOLFUZZY11}
{\mathbb {AND}^{(m,k)}_{Hol}}(\rho) = {\mathbb {AND}^{(m,k)}}(\rho_m \otimes \rho_k) + T^{(m,k,1)}({\bf M}(\rho) \otimes P_0) T^{(m,k,1)}
\end{equation}

and the probability of the holistic conjunction will assume the form:

\begin{equation}\label{HOLFUZZY13}
p({\mathbb {AND}^{(m,k)}_{Hol}}(\rho)) = p(\rho_m) p(\rho_k) + tr({\mathbb {T}}_p^{(m,k,1)}(\rho)).
\end{equation}

Further, in the special case where $\rho=\rho_m\otimes\rho_k $, Eq.(\ref{HOLFUZZY11}) clearly collapse in:
\begin{equation}\label{HOLFUZZY12}
 {\mathbb {AND}^{(m,k)}_{Hol}}(\rho) = {\mathbb {AND}^{(m,k)}}(\rho_m\otimes\rho_k) .
\end{equation}

The above result shows that ${\mathbb {AND}^{(m,k)}}$ is implicitly acting in ${\mathbb {AND}^{(m,k)}_{Hol}}(\rho)$ over the reduced states of $\rho$.\\

In what follows we provide a simple way to estimate  $p({\mathbb {AND}^{(m,k)}_{Hol}}(\rho))$,  $p(\rho_m)$, $p(\rho_k)$ and  $tr({\mathbb {T}}_p^{(m,k,1)}({\bf M}(\rho)))$. We first introduce the following technical definition.  

\begin{definition}
{\rm Let $\rho = (r_{i,j})_{1\leq i,j \leq 2^{m+k}}$ be a density operator in $\otimes^{m+k}\mathbb C^2$ divided in $2^m\times 2^m$ blocks $T_{i,j}$ where each of them is a $2^k$-square matrix. 
\begin{equation*}
    \rho = 
    \left(\begin{array}{cccc}
       T_{1,1} & T_{1,2} & \ldots & T_{1,2^m}  \\
       T_{2,1} &  T_{2,2} & \ldots &  T_{2,2^m}  \\
      \vdots & \vdots & \vdots & \vdots  \\
       T_{2^m,1} &  T_{2^m,2} & \ldots &  T_{2^m,2^m}   \\
          \end{array}\right).
\end{equation*}

Then, the {\it $(m,k)$-Toffoli blocks of $\rho$} are the diagonal blocks $(T_i =T_{i,i})_{1\leq i \leq 2^m}$ of $\rho$. Moreover, we introduce the following parameters:

\begin{enumerate}
\item[]
$\beta^{m,k}(\rho)=\sum_{j=1}^{2^m-1} \sum_{i=0}^{2^{k-1}-1}r_{(2i+1)+j2^k}$ i.e. the sum of the odd diagonal elements of the even
$(m,k)$-Toffoli blocks $T_{2i}$ of $\rho$,

\item[]
$\gamma^{m,k}(\rho)=\sum_{j=0}^{2^m-2}\sum_{i=1}^{2^{k-1}}r_{2i+j2^k}$ the sum of the even diagonal elements of the odd
$(m,k)$-Toffoli blocks $T_{2i+1}$ of $\rho$,

\item[]
$\delta^{m,k}(\rho)=\sum_{j=1}^{2^m-1}\sum_{i=1}^{2^{k-1}}r_{2i+j2^k}$ the sum of the odd diagonal elements of the odd
$(m,k)$-Toffoli blocks $T_{2i+1}$ of $\rho$.

\end{enumerate}
}
\end{definition}

By {\rm \cite[Proposition 4.3]{FS}}
if we consider a density operator $\rho$ in $\otimes^{m+k}\mathbb C^2$ with $m,k \geq 1$ and if $r_i$ is the $i$-th diagonal element of $\rho$, then:

\begin{equation}\label{TOFP0}
p({\mathbb {AND}^{(m,k)}_{Hol}}(\rho))= \sum_{j=1}^{2^{m-1}} \sum_{i=1}^{2^{k-1}}r_{(2j-1)2^k+2i}.
\end{equation} 
More precisely, $p({\mathbb {AND}^{(m,k)}_{Hol}}(\rho))$ is the sum of the even diagonal elements of the even $(m,k)$-Toffoli blocks $T_{2i}$ of $\rho$.

Eq.(\ref{TOFP0}) is a useful tool that allows us to evaluate in very simple way all the terms involved in Eq.(\ref{HOLFUZZY13}), as the next theorem provides\footnote{For more technical details, see {\rm \cite[Proposition 4.4]{FS}}.}.

\begin{theo}\label{error}
Let $\rho$ be a density operator in $\otimes^{m+k}\mathbb C^2$. Let $\rho_m$ be the reduced state of $\rho$ on $\otimes^{m}\mathbb C^2$ and let $\rho_k$ be the reduced state of $\rho$ on $\otimes^{k}\mathbb C^2$. Then,

\begin{enumerate}
\item
$1 = p({\mathbb {AND}^{(m,k)}_{Hol}}(\rho)) +  \beta^{m,k}(\rho) + \gamma^{m,k}(\rho) + \delta^{m,k}(\rho)$

\item
$p(\rho_m) = p({\mathbb {AND}^{(m,k)}_{Hol}}(\rho)) +  \beta^{m,k}(\rho)$,

\item
$p(\rho_k) = p({\mathbb {AND}^{(m,k)}_{Hol}}(\rho)) + \gamma^{m,k}(\rho)$,

\item
$ tr({\mathbb{T}}_p^{(m,k,1)}(\rho)) = p({\mathbb {AND}^{(m,k)}_{Hol}}(\rho))\delta^{m,k}(\rho) - \beta^{m,k}(\rho)\gamma^{m,k}(\rho)$.

\end{enumerate}

\qed
\end{theo}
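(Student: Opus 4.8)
The strategy is to reduce everything to Eq.~(\ref{TOFP0}), which expresses any probability of the form $p(\mathbb{AND}^{(m,k)}_{Hol}(\cdot))$ as an explicit sum of diagonal entries. The key observation is that the diagonal of $\rho$ splits into exactly four disjoint pieces according to the parity of the Toffoli block index and the parity of the position within that block: the even entries of even blocks sum to $p(\mathbb{AND}^{(m,k)}_{Hol}(\rho))$ by Eq.~(\ref{TOFP0}); the odd entries of even blocks sum to $\beta^{m,k}(\rho)$; the even entries of odd blocks sum to $\gamma^{m,k}(\rho)$; and the odd entries of odd blocks sum to $\delta^{m,k}(\rho)$ — these are precisely the definitions given just before the theorem. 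Since $\rho$ is a density operator, $\mathrm{tr}(\rho)=1$, which is the sum of all diagonal entries; partitioning that sum into the four classes immediately yields item~1.

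For items 2 and 3 I would compute the diagonals of the reduced states $\rho_m$ and $\rho_k$ using the block-trace formulas~(\ref{TRACE}). For $\rho_m = \mathrm{tr}_{\mathcal H_b}(\rho)$, the $j$-th diagonal entry is $\mathrm{tr}(T_{j,j})$, i.e.\ the full sum of the diagonal of the $j$-th Toffoli block; hence $p(\rho_m)=\mathrm{Tr}(P_1^{(2^m)}\rho_m)$ picks out the sum over even-indexed Toffoli blocks of all their diagonal entries, which is exactly (even entries of even blocks) $+$ (odd entries of even blocks) $= p(\mathbb{AND}^{(m,k)}_{Hol}(\rho)) + \beta^{m,k}(\rho)$. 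Symmetrically, $\rho_k = \sum_i T_{i,i}$, so its diagonal is the entrywise sum over all blocks of corresponding positions, and $p(\rho_k)=\mathrm{Tr}(P_1^{(2^k)}\rho_k)$ selects the even positions across all blocks, giving (even entries of even blocks) $+$ (even entries of odd blocks) $= p(\mathbb{AND}^{(m,k)}_{Hol}(\rho)) + \gamma^{m,k}(\rho)$. Here one must be careful about indexing conventions — in particular whether the position-parity labelling in the definitions of $\beta,\gamma,\delta$ matches the $P_1$-selector convention — but this is bookkeeping, not substance.

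Item 4 is the real content. The plan is to start from Eq.~(\ref{HOLFUZZY13}), $p(\mathbb{AND}^{(m,k)}_{Hol}(\rho)) = p(\rho_m)p(\rho_k) + \mathrm{tr}(\mathbb T_p^{(m,k,1)}(\rho))$, and solve for the trace term: $\mathrm{tr}(\mathbb T_p^{(m,k,1)}(\rho)) = p(\mathbb{AND}^{(m,k)}_{Hol}(\rho)) - p(\rho_m)p(\rho_k)$. Now substitute items 2 and 3 for $p(\rho_m)$ and $p(\rho_k)$ and expand, writing $P$ for $p(\mathbb{AND}^{(m,k)}_{Hol}(\rho))$, $\beta$ for $\beta^{m,k}(\rho)$, etc.: the product $(P+\beta)(P+\gamma) = P^2 + P\gamma + P\beta + \beta\gamma$, so $\mathrm{tr}(\mathbb T_p) = P - P^2 - P\gamma - P\beta - \beta\gamma = P(1 - P - \beta - \gamma) - \beta\gamma$. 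By item~1, $1 - P - \beta - \gamma = \delta$, so $\mathrm{tr}(\mathbb T_p) = P\delta - \beta\gamma$, which is exactly the claimed formula. The only genuine obstacle is confirming that Eq.~(\ref{HOLFUZZY13}) (equivalently Eq.~(\ref{HOLFUZZY11})) is available in the required form — it is, having been derived earlier from the decomposition~(\ref{decomposition}) and Eq.~(\ref{PROBAND}) — and that the diagonal-partition argument for items 1–3 handles the index ranges ($j$ from $1$ to $2^m-1$ versus $0$ to $2^m-2$, etc.) without off-by-one errors; once items 1–3 are in hand, item~4 is a three-line algebraic manipulation.
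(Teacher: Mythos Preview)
Your proposal is correct. The paper itself does not prove this theorem in-text; it simply closes with \qed\ and defers the details to \cite[Proposition~4.4]{FS} via a footnote. Your argument supplies exactly what that reference is meant to cover: the four-way partition of the diagonal of $\rho$ by block parity and in-block parity gives item~1 from $\mathrm{tr}(\rho)=1$; the block-trace formulas~(\ref{TRACE}) for $\rho_m$ and $\rho_k$ give items~2 and~3; and item~4 follows by solving Eq.~(\ref{HOLFUZZY13}) for $\mathrm{tr}(\mathbb T_p^{(m,k,1)}(\rho))$, substituting items~2 and~3, and invoking item~1 to collapse $1-P-\beta-\gamma$ to $\delta$. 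The only caveat you already flagged --- that the index ranges in the paper's definitions of $\beta,\gamma,\delta$ (and the verbal descriptions ``odd/even diagonal elements of odd/even blocks'') need to be matched carefully against the $P_1$-selector convention in Eq.~(\ref{PROBDEF}) and Eq.~(\ref{TOFP0}) --- is real bookkeeping but not a conceptual obstacle; the paper's own typesetting of those index formulas is not entirely consistent, so some care is warranted there.
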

Interestingly enough, Theorem \ref{error} allows us to obtain some boundary estimation on the quantities involved in Eq.(\ref{HOLFUZZY13}). 

By Theorem \ref{error} (2-3) is immediate to see that
\begin{equation}
p(AND_{Hol}^{(m,n)} (\rho)) \leq \mathtt{p}(\rho_m), \mathtt{p}(\rho_k).
\end{equation}

Further, the incidence of the holistic component ${\bf M}(\rho)$ on the probability of $p(AND_{Hol}^{(m,n)} (\rho))$  lives in the bounded interval:

\begin{equation}
-\frac{1}{4}\leq  tr({\mathbb{T}}_p^{(m,k,1)}(\rho)) \leq \frac{1}{4}.
\end{equation}
To show this, we have to consider the following maximum/minimum problem  $$ \begin{cases}tr({\mathbb{T}}_p^{(m,k,1)}(\rho))= \delta^{m,k}(\rho)^2 - \beta^{m,k}(\rho)  \gamma^{m,k}(\rho)&  \\
2\delta^{m,k}(\rho) + \beta^{m,k}(\rho)+  \gamma^{m,k}(\rho) = 1 &\end{cases}.$$ Note that $\max tr({\mathbb{T}}_p^{(m,k,1)}(\rho))$ is given when $\beta^{m,k}(\rho)+  \gamma^{m,k}(\rho) = 0$. Thus $\max \{tr({\mathbb{T}}_p^{(m,k,1)}(\rho))\} = \delta^{m,k}(\rho)^2 = \frac{1}{4}$.  While $\min \{tr({\mathbb{T}}_p^{(m,k,1)}(\rho))\}$ is given under the condition $ \beta^{m,k}(\rho)+  \gamma^{m,k}(\rho) = 1$. In this way $\min \{tr({\mathbb{T}}_p^{(m,k,1)}(\rho))\} - \max\{\beta^{m,k}(\rho)(1-\beta^{m,k}(\rho)) \} = -\frac{1}{4}$. 

Finally, 
in the special case where $p(AND_{Hol}^{(m,k)}(\rho))=1$ the holistic component of $\rho$ has not any probability incidence, i.e.   $ tr({\mathbb{T}}_p^{(m,k,1)}(\rho))=0$. In this case
$p(\rho^m)=p(\rho^k)=1$. In fact, suppose that $p(AND_{Hol}^{(m,k)}(\rho))=1$; then, by Theorem \ref{error}-(1 and 2), $p(\rho^m)=p(\rho^k)=1$ and $\beta^{m,k}(\rho) = \gamma^{m,k}(\rho) = 0$. Thus, $1= p({\mathbb {AND}^{(m,k)}_{Hol}}(\rho)) + \delta^{m,n}(\rho) + \beta^{m,k}(\rho) + \gamma^{m,k}(\rho) = 1 + \delta^{m,n}(\rho) +0+0$ and then $\delta^{m,n}(\rho) = 0$. Hence, by Theorem \ref{error}-3, $ tr({\mathbb{T}}_p^{(m,k,1)}(\rho))=0$.

\bigskip

To define an holistic extension of the classical conjunction starting from ${\mathbb {AND}}^{(m,k)}_{Hol}$, we have to deal with the following situation: if $\rho$ is a density operator on $\otimes^n\mathbb C^2$ where $n=m+k=m'+k'$ and $m\neq m', k\neq k'$ then, we generally have that $${\mathbb {AND}}^{(m,k)}_{Hol}(\rho)\neq {\mathbb {AND}}^{(m',k')}_{Hol}(\rho).$$ In other words, a logical connective based on ${\mathbb {AND}}^{(-,-)}_{Hol}$ also requires a precise information about the holistic representation of the argument in the sense of Eq.(\ref{decomposition}). For this, we introduce the following notions: $\rho_{\langle m,k \rangle}$ indicates that $\rho$ is a density operator in $\otimes^{m+k}\mathbb C^2$ where the holistic representation $\rho = \rho_m \otimes \rho_k + {\bf M}(\rho)$ is choosen. We also define the set ${\cal D}_{Hol}$ as: $${\cal D}_{Hol} = \{\rho_{\langle m,k \rangle}: m,k \in {\mathbb {N}} \}.$$ If we consider the relation in ${\cal D}_{Hol}$ given by 
\begin{equation}\label{EQUIVD}
\rho_{\langle m,k \rangle} \approx_H \rho_{\langle m',k' \rangle} \hspace{0.2cm} \mbox{iff} \hspace{0.2cm}  m+k = m'+k'
\end{equation}
then $\approx_H$ is an equivalence and ${\cal D}_{Hol}/_{\approx_H} = {\cal D}$.\\

We also note that, if $\rho$ is a density operator on $\otimes^{m+k}\mathbb C^2$, Proposition \ref{MATTOF} suggests a privileged (holistic) interpretation of the codomain for ${\mathbb {AND}}^{(m,k)}_{Hol}(\rho)$. In fact:
\begin{eqnarray*}
{\mathbb {AND}}^{(m,k)}_{Hol}(\rho) & = & T^{(m,k,1)} (\rho \otimes P_0) T^{(m,k,1)}\\
& = & (I^{(2^{m+k + 1})} +  P_1^{(2^m)} \otimes P_1^{(2^k)} \otimes (Not - I))(\rho \otimes P_0)(I^{(2^{m+k + 1})}+   \\
&  &  P_1^{(2^m)} \otimes P_1^{(2^k)} \otimes (Not - I))\\
& = & \rho \otimes P_0 + {\bf M}({\mathbb {AND}}^{(m,k)}_{Hol}(\rho))
\end{eqnarray*}
where ${\bf M}({\mathbb {AND}}^{(m,k)}_{Hol}(\rho)) = P_1^{(2^m)} \otimes P_1^{(2^k)} \otimes (Not - I))(\rho \otimes P_0)(P_1^{(2^m)} \otimes P_1^{(2^k)} \otimes (Not - I))$. This suggests to consider $({\mathbb {AND}}^{(m,k)}_{Hol}(\rho))_{\langle m+k,1 \rangle}$ as a natural holistic representation for ${\mathbb {AND}}^{(m,k)}_{Hol}(\rho)$. Thus, we define the holistic extension of the classical conjunction as follows:  

\begin{itemize}
\item[]${\mathbb {AND}}_{Hol}(\rho_{\langle m,k \rangle}) = ({\mathbb {AND}}^{(m,k)}_{Hol}(\rho))_{\langle m+k,1 \rangle}$.  

\end{itemize}

In this way ${\mathbb {AND}}_{Hol}$ defines a unary connective in ${\cal D}_{Hol}$.  Note that Eq.(\ref{HOLFUZZY11}) provides a deep relation between the connectives ${\mathbb {AND}}_{Hol}$ and ${\mathbb {AND}}$. In fact, for $\rho_{\langle m,k \rangle} = \rho_m \otimes \rho_k + {\bf M}(\rho)$ we have that   
\begin{eqnarray*}
{\mathbb {AND}}_{Hol}(\rho_{\langle m,k \rangle})& = & {\mathbb {AND}^{(m,k)}}(\rho_m \otimes \rho_k) + T^{(m,k,1)}({\bf M}(\rho) \otimes P_0) T^{(m,k,1)}\\
& = & {\mathbb {AND}}(\rho_m \otimes \rho_k) + T^{(m,k,1)}({\bf M}(\rho) \otimes P_0) T^{(m,k,1)}.
\end{eqnarray*}

The connective ${\mathbb {NOT}}$, formally defined on ${\cal D}$, has a natural extension to ${\cal D}_{Hol}$. Taking into account the equivalence $\approx_H$ in ${\cal D}_{Hol}$, introduced in Eq.\ref{EQUIVD}, for each $\rho_{\langle m,k \rangle} \in {\cal D}_{Hol}$ we can define ${\mathbb {NOT}}(\rho_{\langle m,k \rangle}) = {\mathbb {NOT}}([\rho_{\langle m,k \rangle}]_{\approx_H})$ where the equivalence class $[\rho_{\langle m,k \rangle}]_{\approx_H}$ is identified to a density operator on ${\cal D}$. In this way $\approx_H$ becomes a congruence with respect to ${\mathbb {NOT}}$, and ${\mathbb {NOT}}$ is well defined on ${\cal D}_{Hol}$.          

The pair ${\mathbb {AND}}_{Hol}$, ${\mathbb {NOT}}$ defines an holistic type extension for classical logic in the framework of quantum computation with mixed states. We denote this logical system as ${\cal QC}^{Hol}_{AN}$. We want to remark two peculiarities about the system ${\cal QC}^{Hol}_{AN}$.
First: while classical logic needs at least one binary connective to describe any possible truth-function, ${\cal QC}^{Hol}_{AN}$ can describes any possible classical truth-function by involving two unary connectives. Second: since ${\cal QC}^{Hol}_{AN}$ is described by unary connectives, the notion of classical syntactic contradiction - that had a natural extension in ${\cal QC}_{AN}$ - seems to have not an extension in ${\cal QC}^{Hol}_{AN}$. The rest of the section is devoted to this topic. \\

${\cal QC}^{Hol}_{AN}$ is a logical system having unary connectives only. This fact does not allow us to extend, in a natural way, the syntactic representation of the classical contradiction given by $p \land \neg p$. But it is possible to characterize a sub class of ${\cal D}_{Hol}$ that preserves the notion of syntactic contradiction when ${\mathbb {AND}}_{Hol}$ takes arguments  on this class. 

Let us remind that the syntactic contradiction, extended to ${\cal QC}_{AN}$, is given by ${\mathbb {AND}}(\rho, {\mathbb {NOT}}(\rho))$ where $p({\mathbb {AND}}(\rho, {\mathbb {NOT}}(\rho)))= p(\rho) (1-p(\rho))$. Following this idea, we want to characterize the elements $\rho_{\langle m,k \rangle}$  in ${\cal D}_{Hol}$ such that $p(\rho_m) = 1- p(\rho_k)$. In this way, if $\rho_{\langle m,k \rangle}$ is of the form $\rho_{\langle m,k \rangle} = \rho_m \otimes {\mathbb {NOT}}(\rho_m)$ then ${\mathbb {AND}}_{Hol}(\rho_{\langle m,k \rangle}) = {\mathbb {AND}} (\rho_m \otimes {\mathbb {NOT}}(\rho_m))$. It generalizes the fuzzy extension of the syntactic contradiction in ${\cal QC}^{Hol}_{AN}$. We first introduce the following set 
\begin{equation}\label{HOLCONT1}
{\cal D}^{cont}_{Hol} = \{\rho_{\langle m,k \rangle} \in {\cal D}_{Hol}: p(\rho_m) = 1- p(\rho_k)\}. 
\end{equation}
The elements of ${\cal D}^{cont}_{Hol}$ allow us to extend the notion of syntactic contradiction to ${\cal QC}^{Hol}_{AN}$ in the following way, 

\begin{definition}
{\rm An expression of the form ${\mathbb {AND}}_{Hol}(\rho_{\langle m,k \rangle})$ is said to be an {\it holistic contradiction} whenever   $\rho_{\langle m,k \rangle} \in {\cal D}^{cont}_{Hol}$.
}
\end{definition}

We note that an holistic contradiction can be characterized by a special value of $p({\mathbb {AND}}_{Hol}(\rho_{\langle m,k \rangle}))$, because:
\begin{equation}\label{EQCONTHOL}
\rho_{\langle m,k \rangle} \in {\cal D}^{cont}_{Hol} \hspace{0.3cm}{\mbox iff}\hspace{0.3cm}p({\mathbb {AND}}_{Hol}(\rho_{\langle m,k \rangle})) = \delta^{m,k}(\rho).
\end{equation}
In fact, by Theorem \ref{error} we have that:
\begin{eqnarray*}
p(\rho_m) = 1- p(\rho_k)& iff & p({\mathbb {AND}}^{m,k}_{Hol}(\rho)) + \beta^{m,k}(\rho) = 1- p({\mathbb {AND}}^{m,k}_{Hol}(\rho)) - \gamma^{m,k}(\rho) \\
& iff & p({\mathbb {AND}}^{m,k}_{Hol}(\rho) = 1- p({\mathbb {AND}}^{m,k}_{Hol}(\rho)) - \gamma^{m,k}(\rho) - \beta^{m,k}(\rho)\\  
& iff & p({\mathbb {AND}}^{m,k}_{Hol}(\rho)) = \delta^{m,k}(\rho).
\end{eqnarray*}

In other words, the notion of holistic contradiction is completely determinate by the elements of  ${\cal D}^{cont}_{Hol}$. For this reason if $\rho_{\langle m,k \rangle} \in {\cal D}^{cont}_{Hol}$, $\rho_{\langle m,k \rangle}$, it will be called as {\it holistically contradictory}.      

A version of Theorem \ref{error} for the elements of the set ${\cal D}^{cont}_{Hol}$ is established below.

\begin{theo} \label{EQCONTHOL}
Let $\rho_{\langle m,k \rangle} \in {\cal D}^{cont}_{Hol}$. Then:

\begin{enumerate}
\item
$p({\mathbb {AND}}_{Hol}(\rho_{\langle m,k \rangle}) = \frac{1- \beta^{m,k}(\rho) - \gamma^{m,k}(\rho)}{2}$,

\item
$ tr({\mathbb{T}}_p^{(m,k,1)}(\rho)) = \delta^{m,k}(\rho)^2 - \beta^{m,k}(\rho)  \gamma^{m,k}(\rho)$,

\item
$0\leq p({\mathbb {AND}}_{Hol}(\rho_{\langle m,k \rangle}) \leq \frac{1}{2}$,

\item
$p({\mathbb {AND}}_{Hol}(\rho_{\langle m,k \rangle}) = \frac{1}{2}$ iff $\beta^{m,k}(\rho) = \gamma^{m,k}(\rho) = 0$ iff $p(\rho_m) = p(\rho_k) = \frac{1}{2}$ iff $ tr({\mathbb{T}}_p^{(m,k,1)}(\rho)) = \frac{1}{4}$,

\item
$p({\mathbb {AND}}_{Hol}(\rho_{\langle m,k \rangle}) = 0$ iff $\beta^{m,k}(\rho) + \gamma^{m,k}(\rho) = 1$ iff $ tr({\mathbb{T}}_p^{(m,k,1)}(\rho)) =1- \beta^{m,k}(\rho)(1- \beta^{m,k}(\rho)) =1-  \gamma^{m,k}(\rho) (1- \gamma^{m,k}(\rho))$,

\end{enumerate}
\end{theo}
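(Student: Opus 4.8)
The plan is to derive all five items directly from Theorem \ref{error} together with the characterization in Eq.(\ref{EQCONTHOL}) of membership in ${\cal D}^{cont}_{Hol}$, namely that $\rho_{\langle m,k \rangle} \in {\cal D}^{cont}_{Hol}$ holds precisely when $p({\mathbb {AND}}_{Hol}(\rho_{\langle m,k \rangle})) = \delta^{m,k}(\rho)$. First I would record this identity and substitute it into Theorem \ref{error}-(1): from $1 = p({\mathbb {AND}}_{Hol}(\rho_{\langle m,k \rangle})) + \beta^{m,k}(\rho) + \gamma^{m,k}(\rho) + \delta^{m,k}(\rho)$ and $\delta^{m,k}(\rho) = p({\mathbb {AND}}_{Hol}(\rho_{\langle m,k \rangle}))$ one gets $1 = 2\,p({\mathbb {AND}}_{Hol}(\rho_{\langle m,k \rangle})) + \beta^{m,k}(\rho) + \gamma^{m,k}(\rho)$, which rearranges to item (1). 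For item (2), I would plug $\delta^{m,k}(\rho) = p({\mathbb {AND}}_{Hol}(\rho_{\langle m,k \rangle}))$ into Theorem \ref{error}-(4): $tr({\mathbb{T}}_p^{(m,k,1)}(\rho)) = p({\mathbb {AND}}_{Hol}(\rho_{\langle m,k \rangle}))\,\delta^{m,k}(\rho) - \beta^{m,k}(\rho)\gamma^{m,k}(\rho) = \delta^{m,k}(\rho)^2 - \beta^{m,k}(\rho)\gamma^{m,k}(\rho)$.

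For item (3), I would use the nonnegativity of $\beta^{m,k}(\rho)$ and $\gamma^{m,k}(\rho)$ (they are sums of diagonal entries of a density operator, hence nonnegative) and of $p({\mathbb {AND}}_{Hol}(\rho_{\langle m,k \rangle}))$: from item (1), $0 \leq p({\mathbb {AND}}_{Hol}(\rho_{\langle m,k \rangle})) = \tfrac{1 - \beta^{m,k}(\rho) - \gamma^{m,k}(\rho)}{2} \leq \tfrac12$, the lower bound coming from the constraint $\beta^{m,k}(\rho) + \gamma^{m,k}(\rho) \leq 1$ (which itself follows from item (1) plus nonnegativity of $p({\mathbb {AND}}_{Hol}(\rho_{\langle m,k \rangle}))$). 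For item (4): by item (1), $p({\mathbb {AND}}_{Hol}(\rho_{\langle m,k \rangle})) = \tfrac12$ is equivalent to $\beta^{m,k}(\rho) + \gamma^{m,k}(\rho) = 0$, which by nonnegativity is equivalent to $\beta^{m,k}(\rho) = \gamma^{m,k}(\rho) = 0$; combining this with Theorem \ref{error}-(2,3) gives $p(\rho_m) = p(\rho_k) = p({\mathbb {AND}}_{Hol}(\rho_{\langle m,k \rangle})) = \tfrac12$; and in this case item (2) gives $tr({\mathbb{T}}_p^{(m,k,1)}(\rho)) = \delta^{m,k}(\rho)^2 = p({\mathbb {AND}}_{Hol}(\rho_{\langle m,k \rangle}))^2 = \tfrac14$, with the converse following because $tr({\mathbb{T}}_p^{(m,k,1)}(\rho)) = \tfrac14$ forces the earlier maximum/minimum analysis to its extremal point.

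For item (5): by item (1), $p({\mathbb {AND}}_{Hol}(\rho_{\langle m,k \rangle})) = 0$ is equivalent to $\beta^{m,k}(\rho) + \gamma^{m,k}(\rho) = 1$; then $\delta^{m,k}(\rho) = p({\mathbb {AND}}_{Hol}(\rho_{\langle m,k \rangle})) = 0$, so item (2) collapses to $tr({\mathbb{T}}_p^{(m,k,1)}(\rho)) = -\beta^{m,k}(\rho)\gamma^{m,k}(\rho) = -\beta^{m,k}(\rho)(1 - \beta^{m,k}(\rho))$, and by the symmetric substitution $\gamma^{m,k}(\rho) = 1 - \beta^{m,k}(\rho)$ one rewrites this as $1 - \beta^{m,k}(\rho)(1 - \beta^{m,k}(\rho))$ after noting $-t(1-t) = t^2 - t$; here I would double-check the sign bookkeeping, since the claimed form $tr({\mathbb{T}}_p^{(m,k,1)}(\rho)) = 1 - \beta^{m,k}(\rho)(1-\beta^{m,k}(\rho))$ would only be consistent if one reads it modulo the constraint, and reconciling the constants is the step most likely to hide an arithmetic slip. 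The main obstacle, then, is not conceptual — everything reduces to algebra over the three parameters subject to the single linear constraint from item (1) — but rather keeping the signs and the uses of the constraint $\beta^{m,k}(\rho) + \gamma^{m,k}(\rho) = 1$ straight in item (5) and verifying the extremal-value claims in item (4) against the maximum/minimum computation given just before the theorem.
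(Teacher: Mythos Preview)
Your proposal is correct and follows essentially the same route as the paper's proof: each item is reduced, via the characterization $p({\mathbb {AND}}_{Hol}(\rho_{\langle m,k\rangle})) = \delta^{m,k}(\rho)$, to elementary algebra with Theorem~\ref{error}. Your hesitation about item~(5) is in fact warranted: substituting $\delta^{m,k}(\rho)=0$ and $\beta^{m,k}(\rho)+\gamma^{m,k}(\rho)=1$ into item~(2) yields $tr({\mathbb{T}}_p^{(m,k,1)}(\rho)) = -\beta^{m,k}(\rho)(1-\beta^{m,k}(\rho))$, so the ``$1-$'' in the stated formula appears to be a slip in the paper rather than in your derivation.
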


\begin{proof}
1) Since $p({\mathbb {AND}}_{Hol}(\rho_{\langle m,k \rangle}) = \delta^{m,k}(\rho)$, by Theorem \ref{error}-1, $1 = p({\mathbb {AND}}_{Hol}(\rho_{\langle m,k \rangle}) +  \beta^{m,k}(\rho) + \gamma^{m,k}(\rho) + \delta^{m,k}(\rho) = 2 \delta^{m,k}(\rho) + \beta^{m,k}(\rho) + \gamma^{m,k}(\rho)$. Thus, $p({\mathbb {AND}}_{Hol}(\rho_{\langle m,k \rangle}) = \frac{1- \beta^{m,k}(\rho) - \gamma^{m,k}(\rho)}{2}$. 

2) Immediate from Theorem \ref{error}-4 and Theorem \ref{EQCONTHOL}.

3) Since $0 \leq \beta^{m,k}(\rho) + \gamma^{m,k}(\rho) \leq 1$, by item 1,  $0\leq p({\mathbb {AND}}_{Hol}(\rho_{\langle m,k \rangle}) \leq \frac{1}{2}$.

4) By item 1, $p({\mathbb {AND}}_{Hol}(\rho_{\langle m,k \rangle}) = \frac{1}{2}$ iff $\beta^{m,k}(\rho) = \gamma^{m,k}(\rho) = 0$ iff $p(\rho_m) = \delta^{m,k}(\rho) = 1- p(\rho_k) = 1 - \delta^{m,k}(\rho) = \frac{1}{2}$ iff  $tr({\mathbb{T}}_p^{(m,k,1)}(\rho)) = \frac{1}{2}^2 - 0$  

5) Immediate from item 1, item 2 and Theorem \ref{EQCONTHOL}.

\qed
\end{proof}

\begin{example}[Werner states and syntactic contradiction]
{\rm
Werner states provide an interesting example of syntactical contradiction when a bipartition is considered. Werner states, originally introduced in \cite{WER} for two particles to distinguish between classical correlation and the Bell inequality satisfiability, have many interests for their applications in quantum information theory. Examples of this, are entanglement teleportation via Werner states \cite{LEEKIM}, the study of deterministic purification \cite{SHORT}, etc.

\begin{definition}\label{Werner}
{\rm Let us consider a Hilbert space ${\cal H}\otimes {\cal H}$ such that $dim({\cal H}) = n$. A {\it Werner state} on ${\cal H}\otimes {\cal H}$ is a density operator $\rho$ such that, for any $n$-dimensional unitary operator $U$, $$\rho=(U\otimes U)\rho(U^\dagger \otimes U^\dagger).$$
}
\end{definition}

We can express Werner states as a linear combination of the identity and ${\it SWAP}$ operators \cite[$\S$ 6.4.3]{HEIZ}: 

\begin{equation}\label{par}\rho = \rho_w^{(n^2)} = \frac{n+1-2w}{n(n^2-1)}I^{(n^2)}-\frac{n+1-2w n}{n(n^2-1)}{\it SWAP}^{(n^2)}
\end{equation} 
where $w\in [0,1]$ and ${\it SWAP}^{(n^2)}= \sum_{i,j}\ket{\psi_i} {\bra{\psi_j}} \otimes \ket{\psi_j} {\bra{\psi_i}} $ with $\ket{\psi_i}$ and $\ket {\psi_j}$ vectors of the standard $n$-dimensional computational basis.

Let us consider the Werner state $\rho^{(2^{2n})}_{w}$ in $\otimes ^{n+n}\mathbb C^2$. Then, we can prove that\footnote{For more technical details, see  {\rm \cite[Proposition 5.3]{FS}}.}:

\begin{enumerate}
\item $p({\mathbb {AND}}_{Hol}({\rho^{(2^{2n})}_w}_{\langle 2^n,2^n \rangle}))=\frac{2^{2n}+2^n(2w-1)-2}{4(2^{2n}-1)}$,

\item $p({\rho^{(2^{2n})}_w}_n)=\frac{1}{2}$, where ${\rho^{(2^{2n})}_w}_n$ is the partial trace of $\rho^{(2^{2n})}_{w}$ with respect to the subspace $\otimes^n\mathbb C^2$,

\item $tr({\mathbb{T}}_p^{(2^n,2^n,1)}({\bf M}(\rho^{(2^{2n})}_{w}) \otimes P_0)) =\frac{w 2^{n+1}-2^n-1}{4(2^{2n}-1)}$.

\end{enumerate}

By item 2 and by Eq.(\ref{HOLCONT1}) it can be proved that the Werner state ${\rho^{(2^{2n})}_{w}}_{\langle 2^n,2^n \rangle}$ is a syntactic contradiction for each $n \in {\mathbb {N}}$ and for any $w\in[0,1]$.

Figure 1 allows us to see the behavior of the Werner state $\rho^{(2^2)}_{w}$ as a syntactic contradiction taking into account the contribution of each parameter that defines the probability value $p({\mathbb {AND}}_{Hol}({\rho^{2^2}_{w}}_{\langle 2,2 \rangle)})$.

\begin{figure}\label{WERCONT}
\centering
\includegraphics[scale=0.25]{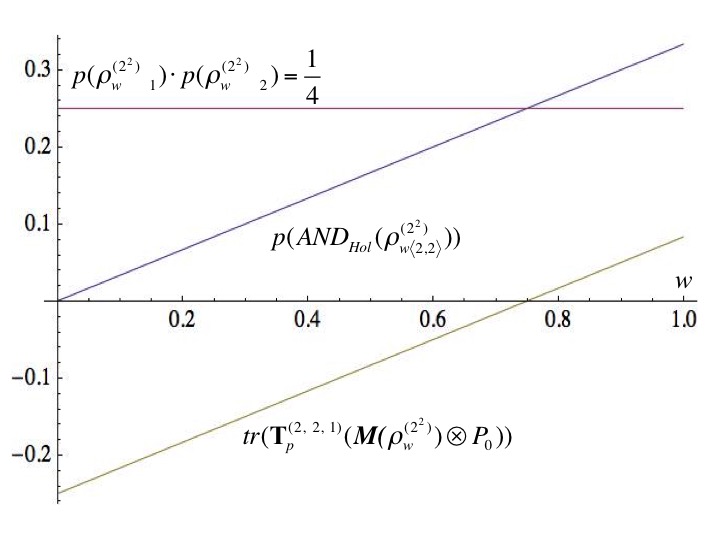}
\caption{Werner as holistic contraddiction and incidence of \hspace{0.1cm} $tr({\mathbb{T}}_p^{(2^n,2^n,1)}({\bf M}(\rho^{(2^{2n})}_{w}) \otimes P_0)).$}\label{nwerner}
\end{figure}

}
\end{example}

\section*{Conclusions}

In this work two semantical extensions of classical logic based on quantum computation with mixed states was investigated: the first, named ${\cal QC}_{AN}$, is a fuzzy type extension, while the second, named ${\cal QC}^{Hol}_{AN}$, is an improving of ${\cal QC}_{AN}$, where also holistic characteristics of bipartite quantum systems are considered.
Both extensions are conceived from logical connectives where natural interpretations are instances of Toffoli quantum gate acting on mixed states.

Formal aspects of these new logical systems were detailed in the paper, and they naturally suggest many interesting open questions and further developments in connection with different research areas. From the perspective of the philosophy of logic, ${\cal QC}_{AN}$ motivates new interpretations of fuzzy connectives in quantum computation. More precisely, some fuzzy logical systems, besides being related to the approximate reasoning or many-valued reasoning \cite{CDM}, admit quantum probabilistic interpretations associated to quantum circuits also. In the fuzzy context, notions like truth, tautology and logical consequences, may have another interpretation in the quantum computational framework. Technically speaking, ${\cal QC}_{AN}$ provides a good probabilistic description of circuits built on Toffoli quantum gates playing a similar role to classical logic in the digital techniques context. ${\cal QC}_{AN}$ deals with the ideal case where only factorizable states are taken into account.
The holistic extension ${\cal QC}^{Hol}_{AN}$, instead, is able to describe combinational aspects of Toffoli quantum gate in a more general realistic way. As we have seen in Section 5, ${\cal QC}^{Hol}_{AN}$ is strongly related to the fuzzy systems that defines ${\cal QC}_{AN}$. Further, this logical system provides an interesting connection between
some holistic features arising from non-factorizable bipartite states and standard fuzzy logic. 
From an epistemological point of view ${\cal QC}_{AN}$ and ${\cal QC}^{Hol}_{AN}$ can be considered as probabilistic type logics defining new kinds of quantum logic.

From an implementative perspective, these logical extensions can be very useful in quantum computing since the fuzzy content of ${\cal QC}_{AN}$ and ${\cal QC}^{Hol}_{AN}$ could be               
specially applied in fuzzy control \cite{DRIA}, allowing to model the so called Pelc's game \cite{MONT1} (a probabilistic variant of Ulam's game). 
It also suggests further developments in the study of error-correcting codes in the context of quantum computation.


\begin{thebibliography}{10}

\bibitem{AD} E. Adams,  {\it A Primer of Probability Logic}, Stanford: CSLI, Stanford University, (1998). 

\bibitem{AERTS} D. Aerts, I. Daubechies, ``Physical justification for using tensor product to describe quantum
systems as one joint system'', Helv. Phys. Acta {\bf 51}, pp. 661-675, (1978).

\bibitem{AKN} D. Aharanov, A. Kitaev, N. Nisan,  \textquotedblleft {\it Quantum circuits with mixed states}\textquotedblright, Proc. 13th Annual ACM Symp. on Theory of Computation, STOC,
 pp. 20-30, (1997).

\bibitem{BvN} G. Birkhoff, J. von Neumann, ``The logic of quantum
mechanics'', Annals of Mathematics \textbf{37} pp. 823-843, (1936).

\bibitem{BDa} O. Bueno, N. da Costa, ``Quasi-truth, paraconosistency, and the foundations of science", Synthese  \textbf{154}, 3, pp.383-399, (2007).


\bibitem{CDM} R. Cignoli, M. I. D'Ottaviano and  D. Mundici,  {\it Algebraic foundations of many-valued reasoning}, Kluwer, Dordrecht-Boston-London, 2000.


\bibitem{CT} R. Cignoli and A. Torrens, ``An algebraic analysis of product logic'', Multi- Valued Logic, {\bf 5}, pp. 45-65, (2000).

\bibitem{DaD} N. da Costa, C. de Ronde, ``The Paraconsistent Logic of Quantum Superpositions", Foundations of Physica, {\bf 43}, 7, pp.845-858, (2013). 

\bibitem{D1} M.L. Dalla Chiara, E. Beltrametti E. Beltrametti, R. Giuntini, R. Leporini, G. Sergioli ``A quantum computational semantics for epistemic logical operators. Part I: epistemic structures", International Journal of Theoretical Physics, {\bf 53}, 10, pp. 3279-3292, (2007).

\bibitem{D2} M.L. Dalla Chiara, E. Beltrametti E. Beltrametti, R. Giuntini, R. Leporini, G. Sergioli ``A quantum computational semantics for epistemic logical operators. Part II: semantics", International Journal of Theoretical Physics, {\bf 53}, 10, pp. 3293-3207, (2007).




\bibitem{DGG} M.L. Dalla Chiara, R. Giuntini, R. Greechie: {\it Reasoning in Quantum Theory, Sharp and Unsharp Quantum Logics}, Kluwer, Dordrecht-Boston-London, (2004).

\bibitem{DD} A. Di Nola, A. Dvure\v censkij, ``Product MV-algebras'', Multi-Valued Logic, {\bf 6}, pp.193-215, (2001).

\bibitem{DEU} D. Deutsch: ``Quantum theory, the Church-Turing Principle and the universal quantum computer", Proceedings of the Royal Society of London A, {\bf 400}, pp. 97-117, (1985).

\bibitem{DIECKS} D. Dieks, ``Quantum Mechanics Without the Projection Postulate and Its Realistic Interpretation",  Foundations of Physics, {\bf 19}, pp. 1397-1423, (1989).


\bibitem{DRIA} D. Driankov, H. Hellendoorn, M. Reinfrank: An Introduction to Fuzzy Control, Springer, Berlin/Heidelberg 1993.

\bibitem{DUNN} J. M. Dunn, T. J. Hagge, L. S. Moss, Z. Wang,  ``Quantum logic as motived by quantum computing", The Journal of Symbolic Logic, {\bf 70}, pp. 353-359, (2005).

\bibitem{DMW} J.M. Dunn, L.S. Moss, Z. Wang, ``The Third Life of Quantum Logic: Quantum Logic Inspired by Quantum Computing", J. Philos. Logic, {\bf 42}, pp. 443-459, (2013). 



\bibitem{DVU} A. Dvurecenskij, `` Tensor product of difference posets", Trans. Amer. Math. Soc. {\bf 347}, pp. 1043-1057, (1995).

\bibitem{FEY} R. Feynman, ``Simulating physics with computers", International Journal of Theoretical Physics, {\bf 21}(6/7), (1982).

\bibitem{DF} H. Freytes, G. Domenech ``Quantum computational logic with mixed states", Mathematical Logic Quarterly {\bf 59}, pp. 27-50, (2013).


\bibitem {FSA} H. Freytes, G. Sergioli, A. Aric\'o, \textquotedblleft{Representing continuous t-norms in quantum computation with mixed states}\textquotedblright, Journal of Physics A, {\bf 43}, N.46, (2010).


\bibitem{FS} H. Freytes, G. Sergioli: ``Fuzzy approach for Toffoli gate in quantum computation with mixed states", Reports on Mathematical Physics {\bf 74}, pp. 154-180, (2014).


\bibitem{FOU} D. Foulis, M. Bennett, ``Tensor products of orthoalgebras", Order {\bf 10}, pp. 271-282, (1993).

\bibitem{GFS} R. Giuntini, H. Freytes, G. Sergioli, ``Quantum logic associated to finite dimensional intervals of modular ortholattices"
 Journal of Symbolic Logic 2015 (to appear).

\bibitem{GUD1} S. Gudder: \textquotedblleft Quantum computational logic\textquotedblright, International Journal of Theoretical Physics, {\bf 42}, pp. 39-47, (2003).


\bibitem{HAJ} P. H\'ajek,  {\it Metamathematics of fuzzy logic}, Kluwer, Dordrecht-Boston-London, (1998).

\bibitem{HAG} T. J. Hagge,   ``$QL({\mathbb{C}}^n)$ determines $n$", The Journal of Symbolic Logic, {\bf 72}, pp. 1194-1196, (2007).

\bibitem{HAY} R. Healey,  ``Holism and Nonseparability", The Journal of Philosophy {\bf 88}, pp. 393-321, (1991).


\bibitem{HEIZ} T. Heinosaari, M. Ziman, {\it The Mathematical Language of Quantum Theory: From Uncertainty to Entanglement}, Cambridge University Press, (2012).


\bibitem{HOW} D. Howard, `` Holism, Separability and the Metaphysical Implications of the Bell Experiments", in J. Cushing and E. Mcmullin (eds.), Philosophical Consequences of Quantum Theory: Reflections on Bell's Theorem, Notre Dame, Indiana, University of Notre Dame Press,  pp. 224-253, (1989).


\bibitem{husimi73} K. Husimi, ``Studies on the foundations of
quantum mechanics I'', Proc. of the Physico-Mathematical Society
Japan \textbf{9}, pp. 766-778, (1937).


\bibitem{KAL}  G. Kalmbach,   {\it Ortomodular Lattices}, Academic Press, London, (1983).


\bibitem{K} K. Kraus,  {\it States, effects and operations}, Springer-Verlag, Berlin, (1983).

\bibitem{LEEKIM} J. Lee, M. S. Kim, ``Entanglement Teleportation via Werner States", Physical Review Letters, {\bf 84},
N. 18, pp. 4236-4239, (2000).

\bibitem{LS}A. Ledda, G. Sergioli, ``Towards quantum computational logics", International Journal of Theoretical Physics, {\bf 49}, N. 12, pp. 3158-3165, (2010).


\bibitem{MONT}  F. Montagna, ``An Algebraic Approach to Propositional Fuzzy logic", J. of Logic, Language and Information {\bf 9}, pp. 91-124, (2000).


\bibitem{MONT1} F. Montagna, C. Marini, Giulia Simi, ``Product logic and probabilistic Ulam games", Fuzzy Sets and Systems {\bf 158}, N. 6, pp. 639-651, (2007).

\bibitem{NIC} M.A. Nielsen, I.L. Chuang: {\it Quantum Computation and Quantum Information}, Cambridge University Press, Cambridge, (2000).


\bibitem{PELC} A. Pelc, ``Searching with known error probability", Theoretical Computer Science {\bf 63}, 185-202, (1989).


\bibitem{SM} J. Schlienz, G. Mahler, \textquotedblleft  Description of entanglement\textquotedblright, Physical Review A, {\bf 52}, pp. 4396-4404, (1995).


\bibitem{SH} P. Shor, ``Polynomial-time algorithms for prime factorization and discrete logarithm on a quantum computer", SIAM J. Comput. {\bf 26}, pp. 1484-1509, (1997).


\bibitem{SHORT} A. Short, \textquotedblleft No deterministic purification for two copies of a noisy entangled state", Physical Review Letters, {\bf 102}, 180502, (2009).

\bibitem{SHI} Y. Shi, ``Both Toffoli and controlled-Not need little help to do universal quantum computation", Quantum Information and Computation {\bf 3}, pp.  84-92, (2003).

\bibitem{TA} V. Tarasov,  \textquotedblleft Quantum computer with Mixed States and Four-Valued Logic\textquotedblright, Journal of Physics A, {\bf 35}, pp. 5207-5235, (2002).

\bibitem{TOF} T. Toffoli, \textquotedblleft Reversible computing", Proceedings of the 7th Colloquium on Automata, Languages and Programming. Springer-Verlag London,
UK, pp. 632-644, (1980).


\bibitem{VON55} J. von Neumann, {\it Mathematical Foundations of Quantum Mechanics}, Princeton, NJ: Princeton University Press, (1955).

\bibitem{WER} R. F. Werner, \textquotedblleft Quantum states with Einstein-Podolsky-Rosen correlations admitting a hidden-variable model", Physical Review A, {\bf 40}, N. 8, pp. 4277-4281, (1989).




\end{thebibliography}
\end{document}